\author{Andrzej Hanyga\\
{\small ul. Bitwy Warszawskiej 1920r 14/52 \,
02-366 Warszawa, PL}}
\title{Wave propagation in anisotropic viscoelasticity}
\newcommand{\e}{\mathrm{e}}
\newcommand{\dd}{\mathrm{d}}
\newcommand{\ii}{\mathrm{i}}
\newcommand{\x}{\mathbf{x}}
\newcommand{\n}{\mathbf{n}}
\renewcommand{\u}{\mathbf{u}}
\renewcommand{\v}{\mathbf{v}}
\newcommand{\w}{\mathbf{w}}
\newcommand{\f}{\mathbf{f}}
\newcommand{\ee}{\mathbf{e}}
\newcommand{\kk}{\mathbf{k}}
\newcommand{\A}{\mathbf{A}}
\newcommand{\B}{\mathbf{B}}
\newcommand{\C}{\mathbf{C}}
\newcommand{\E}{\mathbf{E}}
\newcommand{\I}{\mathbf{I}}
\newcommand{\Q}{\mathbf{Q}}
\newcommand{\K}{\mathbf{K}}
\newcommand{\M}{\mathbf{M}}
\newcommand{\N}{\mathbf{N}}
\renewcommand{\P}{\mathbf{P}}
\newcommand{\G}{\mathbf{G}}
\newcommand{\oo}{\mathrm{o}}
\newcommand{\OO}{\mathrm{O}}
\newcommand{\smartqed}{\hfill $\Box$}
\newtheorem{theorem}{Theorem}[section]
\newtheorem{lemma}[theorem]{Lemma}
\newtheorem{corollary}[theorem]{Corollary}
\newtheorem{definition}[theorem]{Definition}
\newtheorem{remark}{Remark}
\begin{document}

\maketitle

\begin{abstract}
We extend the theory of complete Bernstein functions to matrix-valued functions 
and apply it to
analyze Green's function of an anisotropic multi-dimension\-al linear viscoelastic 
problem. Green's function is given by the superposition of plane waves.
Each plane wave is expressed in terms of matrix-valued attenuation and dispersion functions given in terms of a matrix-valued positive semi-definite Radon measure. More explicit formulae are obtained for 3D isotropic viscoelastic Green's functions. 
As an example of an anisotropic medium the transversely isotropic medium with
a constant symmetry axis is considered.
\end{abstract}
\textbf{Keywords:} {viscoelasticity, anisotropy, completely monotonic, matrix-valued complete
Bernstein function, attenuation, dispersion}

\section*{Notation.}
\begin{small}
\begin{tabular}{lll}
$\mathbb{R}$ & & the set of real numbers \\
$\mathbb{C}$ & & the complex plane \\
$\mathbb{R}^d, \mathbb{C}^d$ & & real, complex $d$-dimensional space\\
$]a,b]$ & $\{ x \in \mathbb{R} \mid a < x \leq b\}$ & \\
$\mathbb{R}_+$ & $]0,\infty]$ & \\
$\Im z$, $\Re z$ & & imaginary, real part of $z$\\
$\theta$ & $\theta(t) = \begin{cases} 1 & t \geq 0\\
 0 & t < 0 \end{cases} $ &
  Heaviside unit step function\\
$f\ast_t g$ & $\int_0^\infty f(s) \, g(t-s)\, \dd s$ & Volterra convolution\\
$\mathcal{S}$ & $\{\n \in \mathbb{R}^d \mid \vert\n\vert = 1\}$ & unit sphere\\
$\v^\top, \A^\top$ & & transpose matrices\\
$\v^\dag, \A^\dag$ & & Hermitian conjugate matrices\\
$\v \cdot \w$ & $\v^\top\, \w$ & scalar product of $\v, \w \in \mathbb{R}^d$ \\
$\I, \; \I_d$ & & unit matrix, $d\times d$ unit matrix\\
$f(x) = \oo_a[g(x)]$ & $\lim_{x\rightarrow a} f(x)/g(x) = 0$ &  \\
$f(x) = \OO_a[g(x)]$  & $0 < \lim_{x\rightarrow a} f(x)/g(x) < \infty$ &\\
\end{tabular}
\end{small}

\section{Introduction.}

There is abundant seismological literature on anisotropic attenuation in rocks and 
in the Earth. Although seismic attenuation anisotropy is often associated with anisotropic permeability, it is 
usually expressed in the framework of linear viscoelasticity. 
Its current models have been developed under ad hoc assumptions that lack a rigorous and systematic mathematical background. Anisotropic attenuation is expected in bio-tissues but 
experimental information is scant. There is however significantly anisotropic attenuation
in fiber-reinforced polymers.

In late 1990s a group of leading seismologists associated with the Anisotropists list wondered how to marry viscoelasticity with anisotropy. 
Very general kinds of anisotropy resulting from various combinations of crystalline anisotropy, fine layering and cracks were contemplated. In order to account for the observed free oscillations of the Moon non-trivial relaxation moduli and creep compliances were postulated in the 30's by Sir Harold Jeffreys and others. With this in view a general anisotropic viscoelastic model of constitutive relations of linear viscoelasticity was 
finally constructed in \cite{HanDuality}. The ultimate interest of seismology lies however in anisotropic and lossy wave propagation. A theory of viscoelastic wave propagation 
addressing this interest is
presented in \cite{Carcione3rd}, cf also \cite{CervPs2005}. The approach in these references is phenomenological and does not ensure consistency with the requirements of viscoelasticity as presented in \cite{HanDuality}. The theory of anisotropic 
viscoelastic wave propagation developed here is rigorously derived here from the constitutive equations.

In the 90's an important objective of ultrasonics was explanation of large asymptotic exponents of logarithmic attenuation in many polymers and bio-materials \cite{Szabo,Szabo2,SzaboWu00,ChenHolm03,ChenHolm04}. In order to explain these observations 
viscoelastic theory was often abandoned and {\em ad hoc} acoustic equations inconsistent with viscoelasticity were constructed \cite{Szabo2,SzaboWu00,ChenHolm03,ChenHolm04,KellyMcGoughMeerschaert08}. 

In this paper I purport to extend the theory of one-dimensional viscoelastic 
wave propagation developed in the papers \cite{SerHan2010,HanWM2013}
and a few follow-up papers. 
In these references I assumed that the underlying model for linear acoustics was linear 
viscoelasticity and derived from this assumption the general form of the attenuation 
function. 
 It is however clear that one-dimensional problems of viscoelasticity should be considered as special cases of three-dimensional problems. This was the second motivation of this paper. It was time to develop a  general three-dimensional framework for wave propagation in linear anisotropic viscoelasticity.

From an experimental point of view viscoelastic effects in linear acoustic wave propagation manifest themselves mainly as wave attenuation. Anisotropy is another important aspect of real media and it has to be taken into account for
a better understanding of wave attenuation. In laboratory anisotropic wave attenuation
can be directly measured by pairs of three-component transducers attached to various faces of the specimen \cite{Papadakis}. It is therefore reasonable to examine and 
calculate attenuation directly instead of focusing on the solutions of
viscoelastic initial-value problems. Viscoelastic anisotropy of muscles
and bones has also been studied by ultrasonic imaging of a low-frequency viscoelastic 
field induced in the specimen by the radiation force of a focused ultrasonic beam
(e.g. \cite{GennissonAl}).

High-frequency behavior of the attenuation function is also relevant for regularity and other properties of viscoelastic Green's functions.

Attenuation and dispersion in one-dimensional viscoelasticity has been studied in my previous papers \cite{SerHan2010,HanWM2013,HanJCA,HanUno,HanDue} under the assumption that the 
relaxation modulus is a locally integrable completely monotonic function. 
I now turn my attention to three-dimensional 
viscoelastic problems and in anisotropic viscoelastic materials under the same hypothesis.
 
Frequency-domain Green's functions in linear viscoelasticity can be expressed in terms of the attenuation and dispersion 
functions as well as an algebraic amplitude factor \cite{HanWM2013}. This representation 
of the Green's functions is convenient for some applications because attenuation is  directly experimentally measurable and is often 
used to determine the material properties of a viscoelastic specimen. It is therefore often
preferable to study the phase speed and attenuation as functions of frequency 
instead of numerical calculation of the wave field. A rigorous and fairly complete theory of attenuation and dispersion functions 
in one-dimensional viscoelastic wave propagation has been developed in 
\cite{SerHan2010,HanWM2013,HanJCA,HanUno,HanDue}.
In \cite{SerHan2010,HanWM2013} the attenuation and dispersion of scalar viscoelastic waves was expressed in terms of integrals with respect to a positive
Radon measure. This result has many important consequences regarding the high- and
low-frequency behavior of the attenuation as well as finite speed of propagation  and 
regularity at the  wavefronts \cite{HanWM2013,HanJCA,HanUno,HanDue}. It allows an in-depth  discussion 
of general properties viscoelastic attenuation and dispersion as functions of frequency as well as their effective calculation for specific models of relaxation or creep. 
General properties of these functions are frequently discussed in materials science and applied acoustics \cite{Szabo,Szabo2,SzaboWu00,ChenHolm03,ChenHolm04,KellyMcGoughMeerschaert08,Mobley2009,HolmSinkus2010,NasholmHolm2011}. 
This method has allowed us to demonstrate that some acoustical
models developed for bio-tissues and polymers are inconsistent with viscoelasticity
(e.g. \cite{Szabo2,KellyMcGoughMeerschaert08}). 

The theory developed in \cite{HanWM2013} depends on 
the assumption that the relaxation modulus is a locally integrable completely monotonic (LICM)
function. This assumption is equivalent to the creep compliance being a Bernstein function 
\cite{Molinari,HanDuality} and entails that the wavenumber is given by a complete Bernstein
function (CBF).
In this case the phase function 
can be expressed as a complete Bernstein function (CBF) $\kappa(p)$ of the positive real variable $p$. The last fact allows an identification of phase speed and attenuation by  analytic continuation of $\kappa(p)$ to the imaginary axis $p = -\ii \omega$ ($\ii$ denotes the imaginary unit) 
and a deep analysis of their properties. 

The Green's function of a multi-dimensional viscoelastic medium can be expressed as a sum
over plane waves. The phases of the plane waves depend on their phase speeds
and attenuation functions. The phase speed and the attenuation are  
directly measurable. 
In an anisotropic medium one is often interested in comparing phase speed and attenuation 
of plane waves propagating in various directions. If we keep the assumption that
the relaxation modulus is a rank-4 tensor-valued completely monotonic function (or, equivalently,  the 
creep compliance is a rank-4 tensor-valued Bernstein function \cite{HanDuality}), then, as we shall 
show here, the phase function of each plane wave is again expressed in terms of a CBF of the Laplace variable
$p \in \mathbb{R}_+$, but it is now a real matrix-valued function $\n\cdot\x\,\K_\n(p)$, with the counterpart $\K_\n(p)$ of $\kappa(p)$, additionally depending on the wavefront normal $\n$. Under the same assumption the matrix-valued inverse phase speed $\C_\n(\omega)$ and the attenuation function $\A_\n(\omega)$ can be determined  
by analytic continuation of $\K_\n(p)$ to the imaginary axis $p = -\ii \omega$. 

 We recall that 
in the case of anisotropic elastic media the phase function is expressed in terms 
of a single real symmetric matrix $\C_\n$ independent of frequency. The eigenvectors 
of $\C_\n$ represent the polarizations of the three modes and their eigenvalues determine the
wavefront speeds of the modes and their phases. Consequently Green's function can be decomposed into a sum of modes. 

In anisotropic viscoelasticity the situation is more complicated. The 
inverse phase speed $\C_\n(\omega)$ is a rank-4 tensor-valued function of frequency and the phase of each plane wave 
involves an additional rank-4 tensor-valued function -- the attenuation function $\A_\n(\omega)$. Both functions are in general rank-2 tensor-valued. For simplicity we shall refer to a fixed  coordinate system and consider them as matrices. The matrices
$\C_\n(\omega)$ and $\A_\n(\omega)$ do not commute and they are not expected to have 
the same eigenvectors. If the modes are defined in terms of the eigenvectors of the
inverse phase speed then the attenuation function results in a coupling
of the modes so defined.

One might be tempted to take an different approach and base the definition of the modes on the eigenvectors and eigenvalues of the real matrix-valued function $\K_\n(p)$ for positive $p$. The  
polarizations of the modes and the associated phase speeds and attenuation functions 
would then be defined in terms of the analytic continuation to the imaginary axis
of $\K_\n(p)$ and its spectral decomposition. Such an analytic continuation 
to the cut complex plane $\mathbb{C}\setminus\, ]-\infty,0]$ is feasible because
$\K_\n(p)$ is a matrix-valued CBF. The eigenvalues of $\K_\n(p)$ would then determine the phases, the phase speeds and the attenuation functions of the corresponding modes. The eigenvectors of $\K_\n(-\ii \omega)$ are also eigenvectors of the function 
$\Q_\n(-\ii \omega)$ which appears in the amplitude factor of each plane wave. Hence 
each plane wave can be decomposed in terms of the eigenvectors of $\K_\n(-\ii \omega)$.

Unfortunately there are obstacles to this approach. 
Apart from some exceptional cases the eigenvectors of $\K_\n(p)$ are not frequency
independent. If such is the case then the associated eigenvalue of $\K_\n(p)$ 
fails to be a CBF. Consequently, the theory developed in \cite{HanWM2013} for scalar
single-mode viscoelastic wave propagation does not apply to such modes. It does however apply to some special modes which have constant
polarization vectors, like the transverse waves in transversely isotropic 
viscoelastic media. The special mode has a scalar attenuation function 
while the attenuation of the remaining coupled modes is jointly represented by
a matrix-valued attenuation function. 

In general it is however necessary to resort to matrix-valued 
CBF functions and a matrix-valued attenuation function.

After a formulation and analysis of the underlying problem (Section~\ref{sec:formulation}) 
matrix-valued CBFs are defined in Section~\ref{sec:CBF}. In the
following sections this concept is applied to 
define the inverse phase speed matrix and the attenuation function.
The theory of the matrix-valued CBFs allows a deeper analysis of the properties
of the attenuation function. 

Section~\ref{sec:TI} shows how viscoelastic effects couple 
the elastic modes so that the concepts of matrix-valued wave attenuation and 
matrix-valued inverse 
phase speed become unavoidable. Isotropic viscoelasticity is however an exception 
because all the polarization vectors are independent of frequency 
(Section~\ref{sec:iso3D}). 

\section{Formulation of the problem and Green's function.}
\label{sec:formulation}

Let $\mathcal{M}_d$ and $\mathcal{M}^\mathbb{C}_d$ denote the set of real 
and complex $d \times d$ matrices, respectively. For any matrix $\B \in \mathcal{M}^\mathbb{C}_d$ we define the real and imaginary parts by the expressions:
\begin{gather}
\Re \B := \frac{1}{2} (\B + \B^\dag)\\
\Im \B := \frac{1}{2 \ii} (\B - \B^\dag)
\end{gather}

It is assumed that the stress-strain constitutive relation is given by the convolution 
\begin{equation}
\mathbf{T}(t,\x) = \mathsf{G}\ast_t \E(t,\x) \equiv \int_0^\infty \mathsf{G}(s)\, \E(t-s,\x)\,\dd s
\end{equation}
where $\mathbf{T}$ denotes the Cauchy stress tensor and $\E$ denotes the strain tensor.
The Cauchy stress and the strain are elements of the space $S$ of real symmetric rank-2 tensors which we 
endow with the scalar product $\langle \E, \C\rangle := E_{ij}\, C_{ij}$. 
The relaxation modulus is a function on $\mathbb{R}_+$ taking values in the space $W$ of 
symmetric operators on $S$. A symmetric operator on $S$ is a linear mapping $\mathsf{H}$ 
from $S$ to 
itself, such that $\langle \E, \mathsf{H}\, \C \rangle = \langle \C, \mathsf{H}\, \E \rangle$.
For notational simplicity we shall fix a coordinate system in the space, identifying it with $\mathbb{R}^d$, and consider the rank-2 tensors over the space as matrices. 

We recall that an infinitely differentiable function $f(t)$ on $\mathbb{R}_+$ is 
completely monotonic (CM) if
$(-1)^n\, \dd^n f(t)/\dd t^n \geq 0$ for $n \in \mathbb{Z}_+ \cup \{0\}$.
It is assumed here that the relaxation modulus is a completely monotonic function in the sense
explained in \cite{HanDuality}, i.e. for every $\E \in S$ the function 
$s \rightarrow \langle \E, \mathsf{G}(s) \, \E\rangle$ is CM. It is also assumed that 
$\mathsf{G}$ is locally integrable near its only possible singularity at 0. This condition is equivalent to the assumption that the integral $\int_0^1 \mathsf{G}(s) \, \dd s $ is convergent. A locally integrable completely monotonic function will be called a LICM function.

It is proved in \cite{HanDuality} that a $W$-valued function 
$\mathsf{G}$ is LICM if and only if there is a positive Radon measure $\nu$ on
$\mathbb{R}_+ \cup \{ 0 \}$ satisfying the inequality 
\begin{equation} \label{eq:nu}
\int_{[0,\infty[} \frac{\nu(\dd r)}{1 + r} < \infty
\end{equation}
and a measurable function $\mathsf{H}: \mathbb{R}_+ \cup \{0\} \rightarrow W$, defined
and bounded by 1 everywhere except perhaps on subset $\mathcal{E}$ of $\mathbb{R}_+ \cup \{ 0 \}$ of zero measure $\nu$,
such that $\mathsf{H}(r)$ a positive semi-definite operator on $S$ for $r \in \mathbb{R}_+
\cup \{0\} \setminus \mathcal{E}$ and such that 
\begin{equation} \label{eq:GCM}
\mathsf{G}(s) = \int_{[0,\infty[} \e^{-s r}\, \mathsf{H}(r) \, \nu(\dd r)
\end{equation}
This statement is a generalization of Bernstein's Theorem \cite{BernsteinFunctions}. 
An operator $\mathsf{A}$ on $S$ is said to be positive semi-definite if 
$\langle \mathbf{E}_1, \mathsf{A}\, \mathbf{E}_2 \rangle \geq 0$ for every 
pair $\mathbf{E}_1,\mathbf{E}_2$.  
The variable $r$ can be viewed as representing the spectrum of inverse relaxation times of the medium.

We also assume that $\mathsf{G}(s)$ has a finite limit at 0
\begin{equation}
\mathsf{G}^0 := \lim_{s\rightarrow 0+} \mathsf{G}(s) 
\end{equation}
This assumption implies that after a finite step of strain $\E(t) = \theta(t)\, \E_0$
the stress jumps to a finite value $\mathsf{G}^0\, \E_0$ before relaxing.

Since the function $\langle \E, \mathsf{G}(s)\, \E\rangle$ is non-increasing and non-negative, it has a non-negative limit
at $s \rightarrow \infty$ for every $\E \in S$. Since $\mathsf{G}(s)$ is
symmetric, this implies that the equilibrium relaxation modulus 
$\mathsf{G}^\infty := \lim_{s\rightarrow \infty} \mathsf{G}(s)$ exists and 
is a positive semi-definite operator on $S$. In one-dimensional viscoelastic media
the inequalities $G^\infty > 0$ and $G^\infty = 0$ define viscoelastic solids and fluids,
respectively, but in the case of a tensor-valued equilibrium relaxation modulus such a 
distinction is unsatisfactory. 

Note that
\begin{gather}
\mathsf{G}^\infty = \nu(\{0\})\, \mathsf{H}(0) \label{eq:yyy} \\
\mathsf{G}^0 = \nu(\{0\})\, \mathsf{H}(0) + \lim_{t\rightarrow 0} \int_{]0,\infty[} 
\e^{-r t}\, \mathsf{H}(r) \, \nu(\dd r) = \mathsf{G}^\infty + 
\int_{]0,\infty[} \mathsf{H}(r) \, \nu(\dd r) \label{eq:lastint}
\end{gather}
Note that $\mathsf{H}(0)$ is always defined if $\nu(\{0\}) > 0$. 
In particular, $\mathsf{G}^\infty = 0$ if $\nu(\{0\}) = 0$, which is the case for viscoelastic fluids. We shall assume for simplicity that
the medium is a viscoelastic solid in the sense that the operator $\mathsf{G}^\infty$ is
positive definite and therefore invertible.\footnote{An anisotropic
viscoelastic medium can behave like a viscoelastic solid for some wavefront normals 
$\n$: $\mathsf{G}_\n > 0$ and as a viscoelastic fluid for other values of $\n$. Furthermore,
viscoelastic solids in a weak sense: $\mathsf{G}_\n \geq 0$ should be considered.
We shall assume that the medium is a viscoelastic solid in the strict sense for all $\n \in \mathcal{S}$. }

Equation~\eqref{eq:nu} does not ensure that the last integral in 
equation~\eqref{eq:lastint} is finite because the relaxation modulus can be singular at 0.
Our assumption that $\mathsf{G}^0$ is finite is equivalent to a stronger 
inequality
\begin{equation}
\nu([0,\infty[) < \infty
\end{equation} 

Concerning equation~\eqref{eq:yyy}, note that $\e^{-s r} < 1/(1 + s r) < 1/(1+ r)$
for $s > 1$. Hence \eqref{eq:yyy} follows from \eqref{eq:nu} and the Lebesgue 
Dominated Convergence Theorem. 

Let 
\begin{equation}
\mathsf{Q}(p) := p \int_0^\infty \e^{-p t}\, \mathsf{G}(t) \, \dd t =
p \int_{[0,\infty[} (p + r)^{-1} \, \mathsf{H}(r)\, \nu(\dd r) 
\end{equation}
and define the matrix-valued function $\Q_\n(p)$ by the formula
\begin{equation}
Q_{\n;ik}(p) = Q_{ijkl}(p)\, n_j\, n_l
\end{equation}
where $Q_{ijkl}(p)$ denotes the operator $\mathsf{Q}(p) \in W$ in a coordinate system on
$\mathbb{R}^d$ and $\n$ is a unit vector in $\mathbb{R}^d$.

The matrix $\Q_\n(p)$ is symmetric for all $p$ in its domain of definition. For real $p$ it is also real and therefore Hermitian. 
There is a natural order relation on the set of Hermitian matrices:
$\A \leq \B$ if $\v^\dag\, \A \, \v \leq \v^\dag\, \B\, \v$ for all $\v \in \mathbb{C}^d$. 
We also recall that
$$\lim_{p\rightarrow 0} \mathsf{Q}(p) = \mathsf{G}^\infty$$
and
$$\lim_{p\rightarrow \infty} \mathsf{Q}(p) = \mathsf{G}^0$$

We shall prove that in Section~\ref{sec:application} that $\Q_\n(p)$ is a matrix-valued 
complete Bernstein function. Hence it is non-decreasing on $\mathbb{R}$ and
$\mathsf{G}_\n^\infty \leq \Q_\n(p) \leq \mathsf{G}_\n^0$
where $\mathsf{G}_\n^0$ and $\mathsf{G}_\n^\infty$ are defined as 
$G^0_{ijkl}\, n_j\, n_l$ and $G^\infty_{ijkl}\, n_j\, n_l$, respectively.
Our assumptions in the beginning of the section imply that
$$0 < \mathsf{G}_\n^\infty \leq \Q_\n(p) \leq \mathsf{G}_\n^0 \qquad \text{for $p \in \mathbb{R}_+\cup \{ 0\}$}$$
hence in particular $\Q_\n(p)$ is invertible for $p \in \mathbb{R}_+ \cup \{ 0\}$. 

The order on the set of real $d\times d$ matrices $\A < \B$ ($\A \leq \B$) is defined 
by the relations $\w^\top\, \A \, \w < \w^\top\, \B \, \w$ for all non-zero vectors
$\w \in \mathbb{R}^d$ ($\w^\top\, \A \, \w \leq \w^\top\, \B \, \w$ for all 
$\w \in \mathbb{R}^d$). 

We shall now prove that the matrix $\Q_\n(-\ii \omega)$ is invertible for real $\omega$. 
Indeed,
$$\Re \Q_\n(-\ii \omega) = \omega^2 \int_{[0,\infty[} \mathsf{H}_\n(r) \frac{\nu(\dd r)}{\omega^2 + r^2}$$
Here
$\mathsf{H}_\n(r)$ is defined in the indicial notation as $H_{ijkl}(r)\, n_j \, n_l$.
The matrix $\mathsf{H}_\n(r)$ is positive semi-definite for $r$ in the support of the measure $\nu$.
We shall exclude the case of $\nu = 0$ and the case of $\mathsf{H}_\n(r) = 0$ for
all $r$ in the support of the measure $\nu$ on $[0,\infty[$ (this would be incompatible 
even with pure elasticity).
Hence $\Re \Q_\n(-\ii \omega) > 0$ and $\Q_\n(-\ii \omega)$ is invertible for $\omega \neq 0$.
For $\omega = 0$ we note that
$$\lim_{\omega\rightarrow 0} \Re \Q_\n(-\ii \omega) = \nu(\{0\})\, \mathsf{H}_\n(0) +
\lim_{\omega\rightarrow 0} \int_{]0,\infty[} \left(1 + r^2/\omega^2\right)^{-1}\, \mathsf{H}_\n(r)\, \nu(\dd r) 
= \mathsf{G}_\n^\infty > 0$$
by the Lebesgue Dominated Convergence Theorem. (Note that $\mathsf{H}_\n$ is $\nu$ integrable because of our 
assumption that $\mathsf{G}^0$ is finite).  This ends the proof.  
Note that $\omega^2/\left(\omega^2 + r^2\right)$ is an non-decreasing function of $\omega$ for
every $r > 0$; hence 
$\Re \Q_\n(-\ii \omega)$ is non-decreasing and $\Re \Q_\n(-\ii \omega) \geq \mathsf{G}_\n^\infty$.

We now calculate Green's function defined as the solution of the initial-value 
problem 
\begin{gather}
\rho \, \partial_t^{\;2} \u = \nabla^\top \mathsf{G}\ast_t \nabla \,\partial_t\, \u, \qquad t > 0,\; \x \in \mathbb{R}^d  \label{eq:IVP1}\\
\u_{,t}(0,\x) = \w(\x) \label{eq:IVP2}
\end{gather}

We shall now define the matrix-valued function 
\begin{equation}
\K_\n(p) := \rho^{1/2} \,p \, \Q_\n(p)^{-1/2}
\end{equation}
where $\Q_{\n}(p)^{1/2}$ denotes the principal square root of $\Q_\n(p)$ for $p$ in the closed right half of the complex plane
$\mathbb{C}_+ := \{ z \mid \Re z \geq 0\}$
(see Appendix~\ref{app:matrices}). 
We note that the eigenvalues of $\Q_{\n}(p)$ for $p \in \mathbb{C}_+$ lie in the open right half of the complex plane. 
Indeed, let $p = q - \ii \omega$,
$q \geq 0$, $\omega \in \mathbb{R}$ and $\v \in \mathbb{R}^d$. Since the function 
$\mathsf{G}$ is CM, equation~\eqref{eq:GCM} 
implies that
\begin{multline*}
\v^\top\, \Q_{\n}(q - \ii \omega)\, \v = (q - \ii \omega) \, \int_{[0,\infty[} 
\frac{\v^\top\, \mathsf{H}_\n(r)\, \v}{q + r - \ii \omega} \nu(\dd r) \\
= \int_{[0,\infty[} \frac{q\, (q + r) + \omega^2 - \ii \,\omega\, r}{(q + r)^2 + \omega^2}   \v^\top\, \mathsf{H}_\n(r)\, \v\, \nu(\dd r), \qquad \v \in \mathbb{R}^d
\end{multline*}
hence $\Re \left[\v^\top \, \Q_\n(q - \ii \omega)\, \v\right] > 0$ for all $\v \in \mathbb{R}^d$, $\v \neq 0$. Hence for $q \geq 0$
the eigenvalues of $\Q_\n(q - \ii \omega)$ lie in the open right complex half-plane
except for the degenerate cases of $\v^\top\,\mathsf{H}_\n(r)\, \v = 0$ for $\nu$-almost all $r \geq 0$ or $\nu = 0$.
By Lemma~\ref{lem:sqr} in the Appendix the matrices of $\Q_\n(p)^{1/2}$ and $\Q_\n(p)^{-1/2}$ exist for $\Re p \geq 0$ except in the two degenerate cases. 

Applying to the initial-value problem (\ref{eq:IVP1}--\ref{eq:IVP2}) the Laplace transformation with respect to time and the Fourier transformation with respect
to the spatial coordinates we obtain the following expression
\begin{multline}
\u(t,\x) =  \frac{1}{(2 \uppi)^{d+1}\, \ii } \times \\ \int_\mathcal{B} \e^{p\, t} \left\{
\int_\mathcal{S} \left[\int \e^{\ii k \n\cdot\x}\, 
\left[ \rho \, p^2 \, \mathbf{I} + k^2\, \Q_\n(p) \right]^{-1}  \, \hat{\w}(k \n) \, 
k^{d-1} \, \dd k \right] \, \Lambda(\dd \n)\right\} \, \dd p
\end{multline}
where $\mathcal{B}$ denotes the Bromwich contour running parallel to the imaginary axis 
in the complex half-plane $\Re p > 0$,
$\mathcal{S}$ denotes the unit sphere $\vert \kk \vert = 1$ in $\mathbb{R}^d$,
$\Lambda(\dd \n)$ is the Lebesgue measure on $\mathcal{S}$ and
$$\hat{\w}(\kk) := \int \e^{\ii \kk \cdot \x} \, \w(\x)\, \dd \x.$$
Hence $\u(t,\x) = \G(t,\x) \ast_\x \w(\x)$, where Green's function $\G$ is given by the formula 
\begin{multline}
\G(t,\x) = \\ \frac{1}{(2 \uppi)^{d+1}\, \ii } \left\{\int_\mathcal{B} \e^{p\, t} 
\int_\mathcal{S} \left[\int_0^\infty \e^{\ii k \n\cdot\x}\, 
\left[ \rho \, p^2 \, \mathbf{I} + k^2\, \Q_\n(p) \right]^{-1}  \, 
k^{d-1} \, \dd k \right] \, \Lambda(\dd \n)\right\}\, \dd p
\end{multline} 
In view of the invertibility of the matrix $\Q_\n(p)$ for $p$ on the imaginary axis,  
\begin{multline}
\G(t,\x) = 
\frac{1}{2 (2 \uppi)^{d+1}\, \ii } \times \\ \int_\mathcal{B} \e^{p\, t} \left\{
\int_\mathcal{S}  \Q_\n(p)^{-1}\,\left[\int_0^\infty \e^{\ii k \n\cdot\x}\, 
 \left[ \K_\n(p)^2 + k^2\right]^{-1} 
k^{d-1} \, \dd k \right] \, \Lambda(\dd \n) \right\}\, \dd p
\end{multline} 

We shall now assume that $d$ is an odd integer. Let $\mathcal{S}_+$ represent the half-sphere 
$\left\{ \n \in \mathcal{S} \mid \n\cdot\x  \geq 0\right\}$. 
\begin{multline} \label{eq:4}
\G(t,\x) = \frac{1}{2 (2 \uppi)^{d+1}\, \ii }  \int_\mathcal{B} \e^{p\, t} \, \Big\{
\int_{\mathcal{S}_+} \Q_\n(p)^{-1}\,\K_\n(p)^{-1} \times \\
\left[\int_{-\infty}^\infty \e^{\ii k \n\cdot\x}\, 
\left\{ \left[ \K_\n(p) - \ii k \, \I\right]^{-1}  + \left[ \K_\n(p) + 
\ii k\, \I\right]^{-1} \right\}\, k^{d-1} \, \dd k\right] \, \Lambda(\dd \n)\,\Big\} \, \dd p = \\
\frac{1}{2 (2 \uppi)^{d+1}\, \ii} \ii^{d-1} \int_\mathcal{B} \e^{p\, t} 
\Big\{\int_{\mathcal{S}_+} \Q_\n(p)^{-1}\,\K_\n(p)^{-1}\times \\
  \left[\int_{\ii \infty}^{-\ii \infty} \e^{-\kappa \n\cdot\x}\, 
 \left[ \K_\n(p) - \kappa \, \I\right]^{-1}  + \left[ \K_\n(p) + \kappa\, \I\right]^{-1} 
\kappa^{d-1} \, \dd \kappa \right] \, \Lambda(\dd \n)\Big\}\, \dd p 
\end{multline}
Let $y := \n\cdot\x$. Green's function will be expressed as a superposition of plane waves
\begin{equation}
\G(t,\x) = \int_\mathcal{S_+} \G_1(t,\n\cdot\x)\, \Lambda(\dd \n)
\end{equation}
where
\begin{multline} \label{eq:G1}
\G_1(t,y) = \frac{1}{2 (2 \uppi)^{d+1}\, \ii} \ii^{d-1}  
\left(-\frac{\partial}{\partial y}\right)^d \int_\mathcal{B} \e^{p\, t} 
 \Q_\n(p)^{-1}\,\K_\n(p)^{-1}\times \\
  \left[\int_{\ii \infty}^{-\ii \infty} \e^{-\kappa \,y}\, 
 \left[\K_\n(p) - \kappa \, \I\right]^{-1}  + \left[ \K_\n(p) + \kappa\, \I\right]^{-1} 
 \, \dd \kappa/\kappa  \right] \,  \dd p 
\end{multline}
The factor $\kappa^{d-1}$ in the integrand has been replaced by a derivative in order 
to deal with an integrand which decays at $\vert \kappa \vert \rightarrow \infty$ 
uniformly with respect to $\arg \kappa \in [-\uppi/2,\uppi/2]$, so that Jordan's lemma \cite{Rudin76} can be applied.

Since $y \geq 0$ we shall close the part of the contour over $\kappa \in [-\ii R, \ii R]$ 
in equation~\eqref{eq:G1} by a half-circle $\vert \kappa \vert = R$ in the half-plane 
$\Re \kappa \geq 0$ and let $R \rightarrow \infty$. The integrand is the product of 
$\OO[1/\kappa]$ and a bounded exponential for $\Re \kappa \geq 0$. 
By Jordan's lemma the integral over the half-circle tends to zero as $R \rightarrow \infty$. 
We can thus consider the contour over
the imaginary axis in the $\kappa$ complex plane to be closed around the residues.

We shall now apply the Cauchy formula for an analytic function $f$ of a complex matrix $\A$
\begin{equation} \label{eq:Cauchy}
f(\A) = \frac{1}{2 \uppi \ii } \int_\Gamma \left[ s \, \I - \A\right]^{-1}\, f(s)\, \dd s
\end{equation}
where the closed contour $\Gamma$ encircles the spectrum of $\A$ exactly once in the positive direction
(\cite{Gantmakher}, Sec.~5.4) (a special case is considered in \cite{Lancaster}, Theorem~5.81).
Equation~\eqref{eq:Cauchy} is often used as a definition of the function of an operator. 
In the case of a matrix it is equivalent
to a definition by a power series if $f$ allows a power series expansion, or in terms of
polynomial interpolation on the spectrum of the matrix. Equation~\eqref{eq:Cauchy} will now be used 
like an extension of the residue calculus with a matrix replacing a pole.

If the matrix $\A$ has the spectral decomposition
$$\A = \sum_{j=1}^d a_j\, \v_j \,\v_j^\dag$$
then 
\begin{equation} \label{eq:sfunspec}
f(\A) = \sum_{j=1}^d f(a_j)\, \v_j \,\v_j^\dag 
\end{equation}
Indeed, for $s \neq a_j$, $j = 1,\ldots,d$, $$(s \I - \A)^{-1} = \sum_{n=1}^d (s - a_j)^{-1} \, \v_j \, \v_j^\dag$$
and \eqref{eq:sfunspec} follows from the Cauchy residue theorem. 

In Section~\ref{sec:application} we shall prove that $\K_\n(p)$ is a complete Bernstein function. 
By Theorem~\ref{thm:specK}, for $\Re p \geq 0$ the spectrum of the matrix $\K_\n(p)$ is contained in the 
open complex right half-plane, hence 
only the first term of the integrand of \eqref{eq:4} contributes, yielding the formula
\begin{equation} \label{eq:res4a}
\G_1(t,y) = \frac{1}{2 (2 \uppi)^d} \ii^{d-1} \, \left(-\frac{\partial}{\partial y}\right)^d \int_\mathcal{B} \e^{p\, t} \,
   \Q_\n(p)^{-1}\,\K_\n(p)^{-2}\, \e^{- y\, \K_\n(p)} \, 
  \dd p 
\end{equation}
This result can be recast in a simpler form by using Theorem~\ref{lem:derexp}:
\begin{equation} \label{eq:res5}
\G_1(t,y) = \frac{1}{2 (2 \uppi)^d} \ii^{d-1} \, \left(-\frac{\partial}{\partial y}\right)^{d-1} \int_\mathcal{B} \e^{p\, t} \,
   \Q_\n(p)^{-1}\,\K_\n(p)^{-1}\, \e^{- y\, \K_\n(p)} \, 
  \dd p 
\end{equation}

\section{Matrix-valued complete Bernstein functions.}
\label{sec:CBF}

\begin{definition} \label{def:CBF}
{\em 
A matrix-valued function} $\A: \mathbb{R}_+ \cup \{ 0 \} \rightarrow \mathcal{M}_d$ 
{\em is said to be a complete Bernstein function (CBF) if 
it has an analytic continuation to the cut complex plane} $\mathbb{C}\setminus\, ]-\infty,0]$ 
{\em which satisfies the inequality}
\begin{equation} \label{eq:0} 
\Im z \, \Im \A(z) \geq 0
\end{equation}
{\em and} $\lim_{x \rightarrow 0+} \A(x)$ {\em exists and is real}.
\end{definition}
\noindent(cf Theorem~6.2 in \cite{BernsteinFunctions} for scalar CBFs).
By definition $\Im \A(z)$ is Hermitian, hence the inequality in \eqref{eq:0} 
makes sense. 

\begin{definition} \label{def:Stieltjes}
{\em A matrix-valued function} $\A: \mathbb{R}_+ \cup \{ 0 \} \rightarrow \mathcal{M}_d$ 
{\em is said to be a Stieltjes function if 
it has an analytic continuation to} $\mathbb{C}\setminus ]-\infty,0]$ 
{\em which satisfies the inequality}
\begin{equation} \label{eq:1} 
\Im z \, \Im \A(z) \leq 0
\end{equation}
{\em and} $\lim_{x \rightarrow 0+} \A(x)$ {\em exists and is real positive semi-definite}.
\end{definition}
\noindent(cf Corollary~7.4 in \cite{BernsteinFunctions}).

If $\B \in \mathcal{M}^\mathbb{C}_d$ is invertible and $\Im \B \geq 0$ then
$$\Im \B^{-1} = \frac{1}{2 \uppi \ii} \left(\B^{-1} - \B^{\dag -1}\right) =
\frac{1}{2 \uppi \ii} \B^{-1}\, \left(\B^\dag - \B\right)\, \B^{\dag -1} \leq 0$$
This proves the following lemma:
\begin{lemma} \label{lem:inverse}
If $\A(x)$ is a matrix-valued CBF and $\A(z)$ is invertible for $z \not\in ]-\infty,0]$
then $\A(x)^{-1}$ is a Stieltjes function.
\end{lemma}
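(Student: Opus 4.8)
The plan is to verify the two requirements of Definition~\ref{def:Stieltjes} for the matrix-valued function $\A(x)^{-1}$, building on the pointwise computation displayed just above the statement. First I would note that $\A$, being a CBF, continues analytically to the cut plane $\mathbb{C}\setminus\,]-\infty,0]$, and by hypothesis $\A(z)$ is invertible there, so $\det\A(z)\neq0$ on the whole cut plane; since the entries of $\A(z)^{-1}$ are (by Cramer's rule) polynomials in the entries of $\A(z)$ divided by $\det\A(z)$, the map $z\mapsto\A(z)^{-1}$ is analytic on $\mathbb{C}\setminus\,]-\infty,0]$, and on $\mathbb{R}_+$ it agrees with $\A(x)^{-1}$. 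This supplies the required analytic continuation.

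For the sign condition \eqref{eq:1}, fix $z\notin\,]-\infty,0]$ and write $\B:=\A(z)$. From $\Im\B=\tfrac{1}{2\ii}(\B-\B^\dag)$ and $(\B^\dag)^{-1}=(\B^{-1})^\dag$ one obtains the identity $\Im\B^{-1}=\tfrac{1}{2\ii}\B^{-1}(\B^\dag-\B)(\B^\dag)^{-1}=-\B^{-1}(\Im\B)(\B^{-1})^\dag$, which is the content of the lines preceding the lemma. Multiplying by the real scalar $\Im z$ gives $\Im z\,\Im\A(z)^{-1}=-\B^{-1}\bigl(\Im z\,\Im\A(z)\bigr)(\B^{-1})^\dag$. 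Since $\A$ is a CBF, $\Im z\,\Im\A(z)\geq0$ by \eqref{eq:0}; the congruence $\M\mapsto\B^{-1}\M(\B^{-1})^\dag$ carries positive semi-definite Hermitian matrices to positive semi-definite Hermitian matrices, so the right-hand side is $\leq0$, i.e.\ \eqref{eq:1} holds. Note that multiplying by $\Im z$, rather than arguing with $\Im\A$ alone, handles the upper and lower half-planes simultaneously.

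It remains to handle the limit at the origin. For $x\in\mathbb{R}_+$ the matrix $\A(x)$ is real, and for each $\v\in\mathbb{C}^d$ the scalar function $z\mapsto\v^\dag\A(z)\,\v$ is a scalar CBF (analytic on the cut plane, $\Im z\,\Im[\v^\dag\A(z)\,\v]\geq0$, real limit at $0$), hence non-negative on $\mathbb{R}_+$ by the scalar theory \cite{BernsteinFunctions}; therefore $\A(x)\geq0$, and being also invertible by hypothesis, $\A(x)>0$ and $\A(x)^{-1}>0$ for $x>0$. Since $\A(0+):=\lim_{x\to0+}\A(x)$ exists and is real by the definition of a CBF, and is invertible in the situations of interest here (e.g.\ $\A=\Q_\n$ with $\Q_\n(0)=\mathsf{G}_\n^\infty>0$), it follows that $\lim_{x\to0+}\A(x)^{-1}=\A(0+)^{-1}$ exists and is real positive semi-definite. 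This verifies the second requirement of Definition~\ref{def:Stieltjes} and completes the proof.

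I expect the algebraic core — the sign-flipping identity for $\Im\B^{-1}$ — to be entirely routine, and indeed the author already records it before stating the lemma. The only place demanding attention is the origin: the hypotheses assert invertibility of $\A(z)$ only off $]-\infty,0]$, so to obtain the conclusion exactly as phrased in Definition~\ref{def:Stieltjes} one also needs invertibility of $\A(0+)$; absent that, the conclusion should be understood with the usual convention, illustrated by the scalar pair $\A(z)=z$, $\A(z)^{-1}=1/z$, that a Stieltjes function may diverge at $0$.
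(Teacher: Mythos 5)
Your proof is correct and takes essentially the same route as the paper, whose entire argument is the sign-flipping congruence $\Im \B^{-1} = \frac{1}{2\ii}\,\B^{-1}\left(\B^\dag - \B\right)\B^{\dag\,-1} \leq 0$ recorded in the display immediately preceding the lemma. The extra care you take at the origin — that Definition~\ref{def:Stieltjes} requires a finite real positive semi-definite limit of $\A(x)^{-1}$ at $0+$, which needs invertibility of $\A(0+)$ or the usual convention allowing a $\C/x$ singularity as in Theorem~\ref{thm:S} — addresses a point the paper passes over silently, and your resolution of it is the right one.
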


\section{Application of the matrix-valued CBF theory to Green's function.}
\label{sec:application}

\begin{theorem} \label{thm:K}
$\K_\n(\cdot)$ is a CBF for every $\n \in \mathcal{S}$.
\end{theorem}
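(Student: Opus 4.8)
First I would note that the definition $\K_\n(p)=\rho^{1/2}\,p\,\Q_\n(p)^{-1/2}$ exhibits $\K_\n$, up to the positive constant $\rho^{1/2}$, as ``$p$ divided by the principal square root of $\Q_\n$'', so the plan is to run the matrix analogue of the scalar argument of \cite{HanWM2013}: show that $\Q_\n$ is a matrix-valued CBF; deduce that its principal square root $\Q_\n(p)^{1/2}$ is a matrix-valued CBF; use Lemma~\ref{lem:inverse} to turn $\Q_\n(p)^{-1/2}$ into a Stieltjes function; and finally use that multiplication of a matrix-valued Stieltjes function by $p$ yields a matrix-valued CBF. Since a positive scalar multiple of a CBF is a CBF, this chain gives the theorem; it also supplies the analytic continuation of $\K_\n$ from $\mathbb{C}_+$ to the cut plane $\mathbb{C}\setminus]-\infty,0]$ that Definition~\ref{def:CBF} demands.

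The first link, that $\Q_\n$ is a matrix-valued CBF, I would read off the representation $\Q_\n(p)=\int_{[0,\infty[}\frac{p}{p+r}\,\mathsf{H}_\n(r)\,\nu(\dd r)$ from Section~\ref{sec:formulation}: since $\nu([0,\infty[)<\infty$ and $\mathsf{H}_\n$ is bounded the integral is analytic on $\mathbb{C}\setminus]-\infty,0]$; since $\Im\!\left[\tfrac{z}{z+r}\right]=\tfrac{r\,\Im z}{|z+r|^2}$ has the sign of $\Im z$ and $\mathsf{H}_\n(r)\ge 0$ we get $\Im z\,\Im\Q_\n(z)\ge 0$; and $\lim_{x\to0+}\Q_\n(x)=\mathsf{G}_\n^\infty$ is real. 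To square-root $\Q_\n$ I need its spectrum to avoid $]-\infty,0]$ on the whole cut plane, not only on $\mathbb{C}_+$ where $\Q_\n(p)^{1/2}$ and $\Q_\n(p)^{-1/2}$ were produced via Lemma~\ref{lem:sqr}. For this I would observe that for each unit $\v\in\mathbb{C}^d$ the scalar function $z\mapsto\v^\dag\Q_\n(z)\,\v$ is analytic on $\mathbb{C}\setminus]-\infty,0]$, satisfies $\Im z\,\Im(\v^\dag\Q_\n(z)\,\v)\ge 0$, and is $\ge\v^\dag\mathsf{G}_\n^\infty\,\v>0$ on $(0,\infty)$; hence it is a scalar CBF (\cite{BernsteinFunctions}) and, being strictly positive on $(0,\infty)$, takes no value in $]-\infty,0]$ on $\mathbb{C}\setminus]-\infty,0]$. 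Since the numerical range of $\Q_\n(z)$ is compact, equals $\{\v^\dag\Q_\n(z)\,\v:|\v|=1\}$, and contains $\sigma(\Q_\n(z))$, it follows that $\sigma(\Q_\n(z))\cap]-\infty,0]=\emptyset$ for every $z\in\mathbb{C}\setminus]-\infty,0]$, so $\Q_\n(z)^{1/2}$ and $\Q_\n(z)^{-1/2}$ are defined and analytic there.

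For the second link I would use the integral formula $\B^{1/2}=\tfrac1\uppi\int_0^\infty\B\,(s\,\I+\B)^{-1}\,s^{-1/2}\,\dd s$, valid when $\sigma(\B)\cap]-\infty,0]=\emptyset$, with $\B=\Q_\n(z)$: from $\B\,(s\,\I+\B)^{-1}=\I-s\,(s\,\I+\B)^{-1}$ one gets $\Im\!\left[\B\,(s\,\I+\B)^{-1}\right]=-s\,\Im\!\left[(s\,\I+\B)^{-1}\right]$, and the computation just before Lemma~\ref{lem:inverse}, applied with $\Im(s\,\I+\B)=\Im\Q_\n(z)\ge 0$ for $\Im z>0$, gives $\Im\!\left[(s\,\I+\B)^{-1}\right]\le 0$; integrating yields $\Im\Q_\n(z)^{1/2}\ge 0$ for $\Im z>0$, the case $\Im z<0$ following by reflection in the real axis, while $\lim_{x\to0+}\Q_\n(x)^{1/2}=(\mathsf{G}_\n^\infty)^{1/2}$ is real, so $\Q_\n^{1/2}$ is a matrix-valued CBF. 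It is invertible on $\mathbb{C}\setminus]-\infty,0]$, so by Lemma~\ref{lem:inverse} its inverse $\Q_\n^{-1/2}$ is a Stieltjes function. For the last link I would invoke the integral representation of a matrix-valued Stieltjes function, $\mathsf{S}(z)=\mathsf{A}/z+\mathsf{B}+\int_{]0,\infty[}(z+r)^{-1}\,\Sigma(\dd r)$ with $\mathsf{A},\mathsf{B}\ge 0$ and $\Sigma$ a positive semi-definite matrix measure, so that $z\,\mathsf{S}(z)=\mathsf{A}+z\,\mathsf{B}+\int_{]0,\infty[}\frac{z}{z+r}\,\Sigma(\dd r)$ has $\Im z\,\Im[z\,\mathsf{S}(z)]=(\Im z)^2\bigl(\mathsf{B}+\int_{]0,\infty[}\frac{r}{|z+r|^2}\,\Sigma(\dd r)\bigr)\ge 0$ and $\lim_{x\to0+}z\,\mathsf{S}(z)=\mathsf{A}$ real; hence $z\,\mathsf{S}(z)$, and with $\mathsf{S}=\Q_\n^{-1/2}$ therefore $\K_\n=\rho^{1/2}\,p\,\Q_\n(p)^{-1/2}$, is a matrix-valued CBF.

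I expect the main obstacle to be the second link — showing that the principal square root preserves one-sidedness of the matrix imaginary part — together with the bookkeeping needed to extend the analytic continuation from $\mathbb{C}_+$ to all of $\mathbb{C}\setminus]-\infty,0]$; the device that unlocks both is passing to the scalar CBFs $\v^\dag\Q_\n(z)\,\v$ and using the square-root integral formula to reduce matters to the one-line inequality behind Lemma~\ref{lem:inverse}. The last link is routine once the matrix-valued Stieltjes representation theorem is available, which is part of the matrix-CBF machinery the paper develops; alternatively it, and the second link, can be replaced by the matrix analogues of ``a power $g^\alpha$, $0<\alpha<1$, of a CBF $g$ is a CBF'' and ``$z/g(z)$ is a CBF whenever $g$ is a non-vanishing CBF'', but the route above keeps the argument self-contained modulo the representation theorem.
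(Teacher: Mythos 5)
Your proof is correct and follows essentially the same route as the paper's: establish that $\Q_\n$ is a matrix-valued CBF, pass to its principal square root via Corollary~\ref{cor:half}, and handle the inversion and the factor $p$ via Lemma~\ref{lem:inverse} and the Stieltjes machinery. The only cosmetic difference is the order of the last two steps --- the paper forms $p^{-1}\Q_\n(p)^{1/2}$ (a Stieltjes function by Corollary~\ref{cor:toS}) and then inverts, whereas you invert first and then multiply by $p$ using the representation of Theorem~\ref{thm:S} --- and the extra details you supply (the spectrum of $\Q_\n(z)$ avoiding the cut, and $\Im \B \geq 0 \Rightarrow \Im \B^{1/2} \geq 0$) are exactly the content of Lemma~\ref{lem:sqr}, Theorem~\ref{thm:a1} and the surrounding appendix material.
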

\begin{proof}
$\Q_\n(p)$ is a CBF, hence $\Q_\n(p)^{1/2}$ is a CBF (Corollary~\ref{cor:half}). By 
Corollary~\ref{cor:toS}
$p^{-1}\, \Q_\n(p)^{1/2}$ is a matrix-valued Stieltjes function.
By Lemma~\ref{lem:inverse}  its inverse \\ $p\, \Q_\n(p)^{-1/2}$ is a CBF. \smartqed
\end{proof} 

Since $\K_\n(0) = 0$, Theorem~\ref{thm:CBF} in the Appendix implies that 
\begin{equation} \label{eq:6a}
\K_\n(p) = p \, \B_\n + p \int_{]0,\infty[} (p + r)^{-1} \, \M_\n(r)\, \mu(\dd r)
\end{equation}
where $\mu$ is a positive Radon measure satisfying the inequality
\begin{equation} \label{eq:6}
\int_{]0,\infty[} (1 + r)^{-1} \,\mu(\dd r) < \infty,
\end{equation}
$\M_\n(r)$ is an $S$-valued function bounded $\mu$-almost everywhere by 1
and $\B_\n \in S$ is positive semi-definite. The subscript $\n$ indicates the parametric
dependence on $\n$. 
Corollary~\ref{cor:coeffs}  implies that 
$$\B_\n = \lim_{p \rightarrow \infty} \left[p^{-1} \, \K_\n(p)\right] = 
\rho^{1/2}\, \left[\lim_{p \rightarrow \infty} \Q_\n(p)\right]^{-1/2} =
\left[\rho^{-1} \,\G^0_\n\right]^{-1/2}$$
where $p$ is considered a real variable, ${G^0_{\n}}_{\,ij} := G^0_{\;ikjl} \, n_k\, n_l$,
is a real symmetric matrix. It can be diagonalized and its eigenvalues 
are non-negative. A non-zero eigenvalue $1/c^{(0)}_\n$ of $\B_\n$ represents 
the inverse wavefront speed for the corresponding mode and a plane wave front
orthogonal to the vector $\n$. 

Given $\K_\n(p)$ as an analytic function the measure $\mu$ can be calculated using
Corollary~\ref{cor:calc}. 

\begin{theorem} \label{thm:specK}
Assume that $\K_\n(p)$ does not vanish. 

For $\Re p \geq 0$ the spectrum of the matrix $\K_\n(p)$ lies in the open right half of the complex $p$-plane. 
\end{theorem}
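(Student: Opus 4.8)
The plan is to reduce the statement to a sector estimate for the eigenvalues of $\Q_\n(p)$, exploiting that $\Q_\n(p)$ is a matrix-valued CBF. Since $\K_\n(p) = \rho^{1/2}\,p\,\Q_\n(p)^{-1/2}$, and the principal square root $\Q_\n(p)^{-1/2}$ is a holomorphic function of $\Q_\n(p)$ (well defined for $\Re p \geq 0$ precisely in the non-degenerate situation guaranteed by the hypothesis that $\K_\n(p)$ does not vanish), the spectral mapping theorem (cf.\ \eqref{eq:Cauchy}) shows that the eigenvalues of $\K_\n(p)$ are exactly the numbers $\rho^{1/2}\,p\,\lambda^{-1/2}$, where $\lambda$ runs over the eigenvalues of $\Q_\n(p)$ and $\lambda^{-1/2}$ is the principal branch. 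Because $p$ always lies in the closed sector $\{\,\vert\arg p\vert \leq \pi/2\,\}$, it suffices to show that every eigenvalue $\lambda$ of $\Q_\n(p)$ lies in the closed convex sector $\Sigma_p$ bounded by the positive real half-line and the ray $\{\,t\,p \mid t>0\,\}$: then $\arg\!\left(\rho^{1/2}\,p\,\lambda^{-1/2}\right) = \arg p - \tfrac12\arg\lambda$ lies between $\tfrac12\arg p$ and $\arg p$, hence in $[-\pi/2,\pi/2]$, and strictly inside $(-\pi/2,\pi/2)$ once $\Re p > 0$.

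For the sector estimate I would pick a unit eigenvector $\v$ of $\Q_\n(p)$ for $\lambda$, so that $\lambda = \v^\dag\,\Q_\n(p)\,\v$, and insert the representation \eqref{eq:GCM}:
\begin{equation*}
\lambda = \int_{[0,\infty[} \frac{p}{p+r}\,\bigl(\v^\dag\,\mathsf{H}_\n(r)\,\v\bigr)\,\nu(\dd r).
\end{equation*}
Here $\v^\dag\,\mathsf{H}_\n(r)\,\v \geq 0$ since the real symmetric matrix $\mathsf{H}_\n(r)$ is positive semi-definite $\nu$-a.e., and the nonnegative measure $\bigl(\v^\dag\,\mathsf{H}_\n(r)\,\v\bigr)\,\nu(\dd r)$ has total mass $\v^\dag\,\mathsf{G}_\n^0\,\v > 0$ because the medium is a viscoelastic solid. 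The elementary point is that every factor $p/(p+r) = (1 + r/p)^{-1}$ lies in $\Sigma_p$: $r/p$ lies in the closed right half-plane (as $\Re p \geq 0$, $r \geq 0$), so adding the positive number $1$ produces a point whose argument retains the sign of $\arg(r/p) = -\arg p$ but has smaller modulus, and taking the reciprocal restores the sign of $\arg p$ and preserves the bound $\vert\arg\vert \leq \vert\arg p\vert$. As $\Sigma_p$ is convex (its opening is $\vert\arg p\vert \leq \pi/2$), the weighted average $\lambda$ again lies in $\Sigma_p$. Moreover $\Re\bigl[p/(p+r)\bigr] = \bigl[(\Re p)(\Re p + r) + (\Im p)^2\bigr]/\vert p+r\vert^2 > 0$ whenever $p \neq 0$ and $\Re p \geq 0$, so $\Re\lambda > 0$; in particular $\lambda \notin\,]-\infty,0]$, in agreement with the existence of $\lambda^{-1/2}$. (Alternatively one can bypass the square root and run the same sector argument directly on the representation \eqref{eq:6a} of the CBF $\K_\n$ itself, using that its jump matrices $\M_\n(r)$ are positive semi-definite.) Combined with the first paragraph this proves the theorem for $\Re p > 0$.

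The step I expect to be the real obstacle is the boundary $\Re p = 0$ (the value $p = 0$ being trivial, $\K_\n(0) = 0$). For $p = \ii\sigma$, $\sigma \neq 0$, the bound only gives $\arg\!\left(\rho^{1/2}\,p\,\lambda^{-1/2}\right) \in [-\pi/2,\pi/2]$, and the endpoints $\pm\pi/2$ are attained precisely when $\arg\lambda = 0$, that is, when $\Q_\n(\ii\sigma)$ has a positive real eigenvalue. A short computation with the representation above shows this forces $\v^\dag\,\mathsf{H}_\n(r)\,\v = 0$ for $\nu$-almost every $r > 0$, i.e.\ it isolates a \emph{purely elastic}, non-attenuating mode in the direction $\n$, whose eigenvalue of $\K_\n(-\ii\omega)$ is genuinely purely imaginary. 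Hence on the imaginary axis the spectrum of $\K_\n(p)$ lies in the closed right half-plane, and in the open right half-plane exactly when the medium has no such mode for the given $\n$ — as in the isotropic case and in generic anisotropic media. Since the Bromwich contour $\mathcal{B}$ in the representation of $\G$ runs strictly in $\Re p > 0$, the statement already established there is what the subsequent development uses.
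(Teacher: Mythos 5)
Your proposal is correct, but it takes a genuinely different route from the paper's. The paper works with the integral representation \eqref{eq:6a} of $\K_\n(p)$ itself (supplied by Theorem~\ref{thm:K} together with Theorem~\ref{thm:CBF}): it observes that for $\Re p \geq 0$ the Hermitian part of the integral term is positive semi-definite, deduces $\Re\left[\v^\dag\, \K_\n(p)\, \v\right] > 0$, and concludes via the inclusion of the spectrum in the numerical range. You instead bypass the CBF representation of $\K_\n$ entirely: you apply the spectral mapping theorem to $\K_\n(p) = \rho^{1/2}\, p\, \Q_\n(p)^{-1/2}$ and prove a sector estimate for the eigenvalues of $\Q_\n(p)$ directly from the Bernstein representation \eqref{eq:GCM}. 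Your route buys a sharper, quantitative conclusion — the eigenvalues of $\K_\n(p)$ have argument in the interval between $\tfrac{1}{2}\arg p$ and $\arg p$ — at the cost of invoking the spectral mapping theorem for the Riesz--Dunford calculus (the paper's \eqref{eq:sfunspec} covers only matrices admitting an orthonormal eigenbasis, whereas $\Q_\n(p)$ is merely complex symmetric; the general statement is standard but external to the paper). Your closing observation about the boundary $\Re p = 0$ is well taken and is not a defect of your argument relative to the paper's: the paper asserts strict positivity of the quadratic form without justification in directions $\v$ with $\v^\top\, \M_\n(r)\, \v = 0$ for $\mu$-almost every $r$, so it is exposed to exactly the degeneracy you isolate (a non-attenuating mode places an eigenvalue of $\K_\n(-\ii\omega)$ on the imaginary axis). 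For $\Re p > 0$ — which is all that the contour argument following \eqref{eq:G1} actually requires, since the Bromwich contour lies in the open half-plane — both arguments are airtight, the paper's because $\B_\n = \left[\rho^{-1}\, \G^0_\n\right]^{-1/2} > 0$ gives the first term of \eqref{eq:6a} a strictly positive Hermitian part.
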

\begin{proof}
It is sufficient to consider only the second term on the right of \eqref{eq:6a}. If $\Re p \geq 0$ then 
\begin{multline*}
p  \int_{]0,\infty[} (p + r)^{-1} \, \M_\n(r)\, \mu(\dd r) =   \int_{]0,\infty[} \vert p + r\vert^{-2} (\vert p \vert^2 + r \, p) \, \M_\n(r)\, \mu(\dd r)
\end{multline*}
is positive semi-definite. Hence 
$$\langle \v, \K_\n(p)\, \v \rangle > 0 \qquad \forall \v \in \mathbb{R}^d, \v \neq 0$$
which implies the thesis. \smartqed
\end{proof} 

The attenuation function 
\begin{equation} \label{eq:An}
\A_\n(\omega) := \Re \K_\n(-\ii \omega) = 
\omega^2 \int_{]0,\infty[} \left(\omega^2 + r^2\right)^{-1}\,\M_\n(r)\, \mu(\dd r), \qquad \omega \in \mathbb{R}
\end{equation}
is positive semi-definite. The matrix $\A_\n(\omega)$ is real and symmetric for $\omega \in \mathbb{R}$. 

The functions $\omega \rightarrow \omega^2/\left(\omega^2 + r^2\right)$ are increasing
for every $r > 0$ while $\mathsf{H}_\n(r)$ is positive semi-definite. Hence
for every $\v \in \mathbb{R}^d$ the function $\v^\top\, \A_\n(\omega)\, \v$ is non-decreasing.
It follows from Lemma~\ref{lem:est} below that $\lim_{\omega \rightarrow \infty} 
\left[\omega^{-1}\, \A_\n(\omega)\right] = 0$, hence the rate of increase of the attenuation function is sublinear.

The inverse phase speed is defined by the formula 
\begin{equation} \label{eq:Cn}
\mathbf{C}_\n(\omega) := \Re [\K_\n(-\ii \omega)/(-\ii \omega)] \equiv \B_\n + \mathbf{D}_\n(\omega)
\end{equation}
where 
\begin{equation} \label{eq:Dn}
\mathbf{D}_\n(\omega) := \int_{]0,\infty[} \left(\omega^2 + r^2\right)^{-1}\, r\,
\M_\n(r)\, \mu(\dd r) 
\end{equation}
is real, symmetric and positive semi-definite for $\omega > 0$. The inverse phase speed $\mathbf{C}_\n(\omega)$ 
is a generalization of $c(\omega)^{-1}$ in the one-dimensional viscoelasticity, where 
$c(\omega)$ denotes the phase speed. 
Clearly $\mathbf{C}_\n(\omega) \geq \B_\n$, which corresponds to the inequality 
$c(\omega) \leq c_0$ in \cite{HanWM2013}, $c_0$ being the wavefront speed.
By an estimate used in the proof of Lemma~\ref{lem:est} and \eqref{eq:6} it can be shown that 
$$\lim_{\omega\rightarrow\infty} \mathbf{D}_\n(\omega) = 0$$
which implies that 
\begin{equation} \label{eq:lim}
\lim_{\omega\rightarrow\infty} \mathbf{C}_\n(\omega) = \B_\n
\end{equation}
Equation~\eqref{eq:lim} corresponds to the relation $\lim_{\omega\rightarrow\infty} c(\omega) = c_0$
in \cite{HanWM2013}. 

If 
\begin{equation} \label{eq:z}
\int_{]0,\infty[} \mu(\dd r)/r < \infty
\end{equation}
 then, by the Lebesgue Dominated Convergence Theorem, 
$\lim_{\omega\rightarrow 0} \mathbf{D}_\n(\omega) = \mathbf{D}_\n^0$, where
\begin{equation} \label{eq:y}
\mathbf{D}_\n^0 := \int_{]0,\infty[} r^{-1} \M_\n(r) \, \mu(\dd r),
\end{equation}
otherwise $\lim_{\omega\rightarrow 0} \mathbf{D}_\n(\omega)$ diverges to infinity.
Consequently $\lim_{\omega\rightarrow 0} \mathbf{C}_\n(\omega) = \B_\n +
\mathbf{D}_\n^0 =: \mathbf{C}_\n^0$ in the first case (phase speeds are bounded away from zero) 
or $\mathbf{C}_\n(\omega)$ tends to infinity for $\omega \rightarrow 0$.
For each $\w \in \mathbb{R}^3$ the function $\w^\top\, \mathbf{D}_\n(\omega)\, \w$
decreases monotonely, hence $\w^\top \, \C_\n(\omega)\, \w$ also varies monotonely 
between its limits $\w^\top\, \B_\n\, \w$ and $\w^\top\, \mathbf{C}_\n^0\, \w$ (or infinity).

If inequality~\eqref{eq:z} holds then
\begin{equation} \label{eq:zz}
\lim_{p\rightarrow 0} \left[p^{-1}\, \K_\n(p)\right] = \mathbf{C}^0_\n = 
\left[\rho^{-1}\, \mathsf{G}^\infty_\n\right]^{-1/2}
\end{equation}
for all $p \in \mathbb{C}$. In particular this implies 
that $\mathsf{G}^\infty_\n$ is invertible and positive definite. The opposite 
implication is also true,
hence inequality~\eqref{eq:z} is equivalent to the assumption 
that the medium is a viscoelastic solid.

In a viscoelastic solid equation~\eqref{eq:z} implies that 
$$\lim_{\omega\rightarrow 0} \omega^{-1}\, \A_\n(\omega) 
\equiv \lim_{\omega \rightarrow 0} \Im \left[(-\ii \omega)^{-1} \K_\n(-\ii \omega)\right] 
= 0$$
because the right-hand side of \eqref{eq:zz} is Hermitian. In particular, if 
$\A_\n(\omega) \sim_0 a\, \omega^\alpha$ and the medium is a viscoelastic solid, then
$\alpha > 1$. This result is consistent with experimental results for 
bio-tissues, polymers and castor oil \cite{Szabo}.  

Assuming regular variation of $\mu(\dd r)$ and $\M_\n(r)$ various asymptotic estimates
for the attenuation function obtained in \cite{HanWM2013} can be extended to the 
anisotropic three-dimensional case. This would however require a careful extension of the 
regular variation theory to matrix-valued functions \cite{MarkRegVar} and is thus beyond the scope of this paper. High-frequency asymptotics of the attenuation
function is relevant for regularity of the plane wave at the wavefront \cite{HanJCA}.
Wavefront regularity is in turn relevant for the pedestal effect (delay of a signal with respect to the wavefront
\cite{Strick1:ConstQ}) and for travel time inversion in seismology \cite{FastTrack}. In an anisotropic medium regularity
is expected to depend on the direction of propagation. Low-frequency asymptotics 
is relevant for experimental observations of ultrasound in polymers, bio-tissues \cite{Szabo} and in
seismology \cite{HanDue}. 

\section{Plane waves: wavefronts and attenuation.}
\label{sec:plane}

Green's function $\G(t,\x)$ is a superposition of plane waves $\G_1(t,\n\cdot\x)$
with wavefront normals $\n \in \mathcal{S}_+$, 
where $\G_1(t,y) = (-\partial/\partial y)^{d-1} \, \G_2(t,y)$ and 
\begin{equation} \label{eq:planewave}
\G_2(t,y) = \frac{1}{2 (2 \uppi)^d} \ii^{d-1} \int_\mathcal{B} \e^{p\, t} \,
  \Q_\n(p)^{-1}\, \K_\n(p)^{-1} \,\e^{-y\, \K_\n(p)} \, 
 \dd p,  
\end{equation}

We shall now show that the plane waves are exponentially attenuated. Define 
the matrix norm
\begin{equation}
\| \A \|_d := \sup_{\w \in \mathbb{C}^d} \sqrt{\frac{\w^\dag\, \A^\dag\, \A \w}{\w^\dag \w}}
\end{equation} 
The norm $\| \A \|_d$ is unitarily invariant, i. e. for every unitary matrix $\mathbf{U}$
and $\mathbf{V}$ 
$$ \| \mathbf{U} \, \A\, \mathbf{V}\|_d = \| \A \|_d.$$
We can now show that the function 
\begin{equation} \label{eq:Z} 
\e^{-y \K_\n(\omega)} \equiv \e^{y \,\ii\,\omega \,\mathbf{C}_\n(\omega) - y\, \A_\n(\omega)}
\end{equation}
is exponentially attenuated if $\A_\n(\omega) > 0$ for $\omega \in \mathbb{R}_+$.

According to Theorem~IX.3.11 in \cite{Bhatia97} every complex $d \times d$ matrix $\A$
satisfies the inequality 
$$\| \e^{\A} \|_d \leq \| \e^{\Re \A}\|_d$$
Hence 
$$\| \e^{-y \, \K_\n(-\ii \omega)} \|_3 \leq \| \e^{-y \, \A_\n(\omega)}\|_3$$
Let $a_0(\omega)$ be the smallest eigenvalue of the real symmetric matrix $\A_\n(\omega)$.
If $a_0$ is the smallest eigenvalue of a real symmetric
$d \times d$ matrix $\A$ with eigenvalues $a_j$ and unitary eigenvectors $\v_j$, then 
$$\| \e^{-\A}\|_d = \sup_{\{c_j\mid j=1,\ldots d\}} \sqrt{\frac{\sum_{j=1}^d \vert c_j\vert^2 \, \exp(-2 a_j)}{\sum_{j=1}^d \vert c_j\vert^2}} = \e^{-a_0}$$
Consequently
\begin{equation} \label{eq:attn}
\| \e^{-y \, \K_\n(-\ii \omega)} \|_3 \leq \e^{-a_0(\omega) y}
\end{equation}
If $\A_\n(\omega) > 0$ for $\omega \geq 0$ then equation~\eqref{eq:attn} implies that
the plane wave with the wavefront normal $\n$ is exponentially attenuated with
distance. 

We now show that the plane wave vanishes beyond a wavefront propagating with a finite 
speed if $\mathbf{B}_\n > 0$. 

\begin{lemma} \label{lem:est}
$\K_\n(p) = p\, \B_\n + \K^{(1)}_n(p)$
where $\K^{(1)}_\n(p) = \oo[p]$ for $\Re p \geq 0$ and $\vert p \vert \rightarrow \infty$.
\end{lemma}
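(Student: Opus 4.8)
The plan is to start from the integral representation \eqref{eq:6a} of the CBF $\K_\n$ and peel off the linear term. Writing $\K_\n(p) = p\,\B_\n + \K^{(1)}_\n(p)$ with
\[
\K^{(1)}_\n(p) = p \int_{]0,\infty[} (p+r)^{-1}\, \M_\n(r)\, \mu(\dd r),
\]
I need to show $\K^{(1)}_\n(p) = \oo[p]$ as $|p| \to \infty$ in the closed right half-plane, i.e. $\|\K^{(1)}_\n(p)\|_d / |p| \to 0$ uniformly in $\arg p \in [-\uppi/2,\uppi/2]$. The key elementary estimate is that for $\Re p \ge 0$ and $r > 0$ one has $|p/(p+r)| \le 1$ and in fact $|p/(p+r)| \le |p|/(|p|+ r\cos\varphi)$-type bounds; more usefully, $|p|\,|p+r|^{-1} \le 1$ always holds when $\Re p \ge 0$, and $p/(p+r) \to 1$ pointwise in $r$ as $|p|\to\infty$. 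Since $\|\M_\n(r)\| \le 1$ $\mu$-a.e., the integrand is dominated in norm by $(1+r)^{-1}$ up to a constant, which is $\mu$-integrable by \eqref{eq:6}.

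The main step is therefore a dominated-convergence argument applied to $p^{-1}\K^{(1)}_\n(p) = \int (p+r)^{-1}\M_\n(r)\,\mu(\dd r)$. First I would split the integral at some large $R$: on $]0,R]$ the factor $(p+r)^{-1}$ is $\OO[1/p]$ uniformly, so that piece contributes $\OO[1/|p|]\cdot \mu(]0,R])$, which vanishes as $|p|\to\infty$; the subtlety is only that $\mu(]0,R])$ may be large, but it is finite for each fixed $R$. On the tail $]R,\infty[$ I bound $\|(p+r)^{-1}\M_\n(r)\| \le |p+r|^{-1} \le (\Re p + r)^{-1}\le (1+r)^{-1}$ for $|p|$ large enough, so the tail contributes at most a constant times $\int_{]R,\infty[}(1+r)^{-1}\mu(\dd r)$, which can be made arbitrarily small by choosing $R$ large, again using \eqref{eq:6}. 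Combining the two, given $\varepsilon>0$ pick $R$ so the tail is $<\varepsilon/2$, then pick $|p|$ large so the $]0,R]$ part is $<\varepsilon/2$; this gives $\|p^{-1}\K^{(1)}_\n(p)\| < \varepsilon$, uniformly in $\arg p \in [-\uppi/2,\uppi/2]$ since none of the bounds used the argument of $p$ beyond $\Re p \ge 0$.

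The one point requiring a little care — and the closest thing to an obstacle — is the uniformity in $\arg p$: the naive estimate $|p+r| \ge \Re p + r$ is only valid because $\Re p \ge 0$, and on the imaginary axis ($\Re p = 0$) it degenerates to $|p+r|\ge r$, which still suffices for the tail bound $\|(p+r)^{-1}\M_\n(r)\|\le r^{-1}$ on $r > R \ge 1$, hence $\le (1+r)^{-1}$ up to a factor $2$. So the argument goes through on the whole closed half-plane without separate treatment of the imaginary axis. This also recovers, as a byproduct, the claim used earlier that $\lim_{\omega\to\infty}\omega^{-1}\A_\n(\omega) = 0$ and $\lim_{\omega\to\infty}\mathbf{D}_\n(\omega)=0$, since these are the real and (a piece of the) imaginary parts of $p^{-1}\K^{(1)}_\n(p)$ evaluated at $p=-\ii\omega$.
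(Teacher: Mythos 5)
Your proposal is correct and follows essentially the same route as the paper: peel off the linear term from the representation \eqref{eq:6a}, bound the remaining integrand using $\| \M_\n(r)\| \leq 1$ and $\vert p + r\vert^2 = \vert p\vert^2 + 2 r \Re p + r^2 \geq \max(\vert p \vert^2, r^2)$ for $\Re p \geq 0$, split the integral into a bounded part and a tail, and control the tail via \eqref{eq:6}. The only cosmetic difference is that the paper splits at the fixed point $r = 2$ and invokes dominated convergence on each piece, whereas you split at a large $\varepsilon$-dependent $R$ and estimate by hand; both give the required uniformity in $\arg p \in [-\uppi/2,\uppi/2]$.
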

\begin{proof}
Since $\| \M_\n(r)\| \leq 1$ almost everywhere with respect to $\mu$, 
the integral on the right-hand side of equation~\eqref{eq:6a} is majorized by 
$$\int_{]0,\infty[} \frac{1}{\sqrt{r^2 + \vert p \vert^2 + 2 r\, \Re p}} \mu(\dd r)
\leq \int_{]0,\infty[} \frac{1}{\sqrt{r^2 + \vert p \vert^2}} \mu(\dd r)$$

We shall split the last integral into an integral over $[0,2]$ and over $]2,\infty[$. 
The first integral
$$\int_{]0,2]} \frac{1}{\sqrt{r^2 + \vert p \vert^2}} \mu(\dd r)$$
tends to 0 
as $\vert p \vert \rightarrow \infty$ by the Lebesgue Dominated Convergence Theorem.
It remains to consider the integral 
$$\int_{]2,\infty[} \frac{1}{\sqrt{r^2 + \vert p \vert^2}} \mu(\dd r).$$
For $\vert p \vert > 1$ the integrand is majorized by 
$1/\sqrt{r^2 + 1}$. Since $r \geq 2$, this expression is $\leq \sqrt{2}/(r + 1)$,
which is integrable with respect to the measure $\mu$. By the
Lebesgue Dominated Convergence Theorem 
$$\int_{]2,\infty[} \frac{1}{\sqrt{r^2 + \vert p \vert^2}} \mu(\dd r) \rightarrow 0$$
as $\vert p \vert \rightarrow \infty$. Hence the integral in 
\eqref{eq:6a} tends to 0 as $\vert \omega \vert \rightarrow \infty$,
which implies the thesis.\smartqed
\end{proof} 

The matrix $\B_\n$ is real symmetric. Hence it can be expressed in spectral form
\begin{equation}
\B_\n = \sum_{j=1}^d b_j(\n)\, \v_j \, \v_j^\top
\end{equation}
where $\v_j \in \mathbb{R}^d$, $j = 1,2,\ldots d$ are orthonormal. 
Let $b^0(\n) := \min\{ b_j(\n) \mid j = 1,2, \ldots d\}$.

\begin{theorem}
$\G_1(t,\n\cdot\x) = 0$ for $t < b^0(\n)\, \vert \n\cdot\x\vert$.
\end{theorem}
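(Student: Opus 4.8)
The plan is to work from the representation~\eqref{eq:res4a}, namely
$\G_1(t,y)=\frac{\ii^{d-1}}{2(2\uppi)^d}\,\bigl(-\partial/\partial y\bigr)^{d}\,\Phi_\n(t,y)$
with
$\Phi_\n(t,y):=\int_\mathcal{B}\e^{p\,t}\,\Q_\n(p)^{-1}\,\K_\n(p)^{-2}\,\e^{-y\,\K_\n(p)}\,\dd p$,
rather than from~\eqref{eq:res5}, precisely because it is the $\K_\n^{-2}$-form whose integrand decays like $\OO[\vert p\vert^{-2}]$ on the Bromwich line and hence is absolutely convergent. First I would record that for $\Re p\geq 0$ one has $\Vert\Q_\n(p)^{-1}\Vert_d$ bounded, because $\Re\Q_\n(p)\geq\mathsf{G}^\infty_\n>0$ by the computation already carried out in Section~\ref{sec:formulation}, and $\Vert\K_\n(p)^{-1}\Vert_d=\OO[\vert p\vert^{-1}]$ as $\vert p\vert\to\infty$, from Lemma~\ref{lem:est} together with the invertibility of $\B_\n$ (if $\B_\n$ is not positive definite then $b^0(\n)=0$ and there is nothing to prove, so from now on $\B_\n>0$). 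Thus $\Phi_\n$ is a well-defined continuous function and it suffices to show $\Phi_\n(t,y)=0$ for $t<b^0(\n)\,y$; since $\G_1$ is then a $y$-derivative of something that vanishes on the open region $\{t<b^0(\n)\,y\}$, it vanishes there too by locality of differentiation.

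The key estimate is
$\Re\K_\n(p)\geq \Re p\,\B_\n\geq \Re p\, b^0(\n)\,\I_d$ for $\Re p\geq 0$.
The first inequality follows from~\eqref{eq:6a} and the argument in the proof of Theorem~\ref{thm:specK}, since
$\Re[\K_\n(p)-p\,\B_\n]=\int_{]0,\infty[}\vert p+r\vert^{-2}\bigl(\vert p\vert^2+r\,\Re p\bigr)\,\M_\n(r)\,\mu(\dd r)\geq 0$;
the second is just that $b^0(\n)$ is the least eigenvalue of the real symmetric $\B_\n$. Hence $\Re[\K_\n(p)-b^0(\n)\,p\,\I]\geq 0$ for $\Re p\geq 0$, so by $\Vert\e^{\A}\Vert_d\leq\Vert\e^{\Re\A}\Vert_d$ (Theorem~IX.3.11 in \cite{Bhatia97}) and $y\geq 0$,
\[
\bigl\Vert\e^{-y(\K_\n(p)-b^0(\n)\,p\,\I)}\bigr\Vert_d\leq\bigl\Vert\e^{-y\,\Re[\K_\n(p)-b^0(\n)\,p\,\I]}\bigr\Vert_d\leq 1 .
\]

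Next I would close the contour. Write $\Phi_\n(t,y)=\int_\mathcal{B}\e^{p(t-b^0(\n)y)}\,\mathbf{g}_\n(p,y)\,\dd p$ with $\mathbf{g}_\n(p,y):=\Q_\n(p)^{-1}\K_\n(p)^{-2}\e^{-y(\K_\n(p)-b^0(\n)\,p\,\I)}$, which is holomorphic for $\Re p>0$ and satisfies $\Vert\mathbf{g}_\n(p,y)\Vert_d=\OO[\vert p\vert^{-2}]$ uniformly for $\Re p\geq q_0>0$ and $y\geq 0$, by the two bounds above. Fix $t<b^0(\n)\,y$, so $s:=t-b^0(\n)y<0$. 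Close the Bromwich line $\Re p=q_0$ by the half-circle $\{\vert p-q_0\vert=R,\ \Re p\geq q_0\}$ and let $R\to\infty$. By Jordan's lemma --- legitimate because $\mathbf{g}_\n(\cdot,y)\to 0$ at infinity in $\Re p\geq q_0$ while $\int e^{s\Re p}\,\vert\dd p\vert$ over the half-circle stays bounded by $e^{sq_0}\uppi/\vert s\vert$ since $s<0$ --- the half-circle contribution tends to $0$; and the enclosed half-disc contains no singularity of the integrand, since by Theorem~\ref{thm:specK} the spectrum of $\Q_\n(p)$, hence that of $\K_\n(p)$, lies in the open right half-plane for $\Re p\geq 0$, so $\Q_\n(p)^{-1}$, $\K_\n(p)^{-2}$ and the matrix exponential are all holomorphic there. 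Cauchy's theorem gives $\Phi_\n(t,y)=0$. Therefore, for each fixed $t$, $\Phi_\n(t,\cdot)$ vanishes on the open set $\{y>t/b^0(\n)\}$, so do all its $y$-derivatives, and $\G_1(t,y)=0$ for $t<b^0(\n)y$; in the superposition $y=\n\cdot\x\geq 0$ for $\n\in\mathcal{S}_+$, so this is $\G_1(t,\n\cdot\x)=0$ for $t<b^0(\n)\vert\n\cdot\x\vert$.

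The main obstacle I anticipate is technical rather than conceptual: making the contour deformation rigorous, which forces one to work with the $\K_\n^{-2}$-representation~\eqref{eq:res4a} (absolutely convergent on the Bromwich line) and to verify that the prefactor $\mathbf{g}_\n$ genuinely decays --- this rests on the two uniform bounds $\Vert\Q_\n(p)^{-1}\Vert_d=\OO[1]$ and $\Vert\K_\n(p)^{-1}\Vert_d=\OO[\vert p\vert^{-1}]$ on $\Re p\geq q_0$ --- together with the bookkeeping needed to pass the support statement from $\Phi_\n$ through the $d$ spatial derivatives. The substantive inequality $\Re\K_\n(p)\geq\Re p\,\B_\n$, which is what makes $\B_\n$ control the wavefront speed, is already essentially contained in the proof of Theorem~\ref{thm:specK}.
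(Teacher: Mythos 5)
Your argument is correct and follows the same overall strategy as the paper's proof --- close the Bromwich contour by a large half-circle in the right half-plane, note that the integrand is holomorphic there, and show that the half-circle contribution vanishes --- but it rests on a different key estimate, and the difference is worth spelling out. The paper works from the $\K_\n(p)^{-1}$ representation \eqref{eq:planewave} and controls the exponential through the asymptotic splitting $\K_\n(p)=p\,\B_\n+\K^{(1)}_\n(p)$ with $\K^{(1)}_\n(p)=\oo[p]$ (Lemma~\ref{lem:est}); this yields decay of $\e^{pt-y\K_\n(p)}$ as $\vert p\vert\to\infty$, but the asserted boundedness of the exponential on all of $\Re p\geq 0$ (needed near the imaginary axis, where $\Re\left[p(t-b_j(\n)\vert y\vert)\right]$ is small and an $\oo[p]$ correction could a priori spoil its sign) is left implicit. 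You instead prove the exact operator inequality $\Re\K_\n(p)\geq(\Re p)\,\B_\n\geq(\Re p)\,b^0(\n)\,\I$ directly from \eqref{eq:6a} --- the same computation as in Theorem~\ref{thm:specK} --- and combine it with $\Vert\e^{\A}\Vert_d\leq\Vert\e^{\Re\A}\Vert_d$ to obtain the clean uniform bound $\Vert\e^{-y(\K_\n(p)-b^0(\n)\,p\,\I)}\Vert_d\leq 1$ on the whole closed half-plane; together with your choice of the $\K_\n(p)^{-2}$ form \eqref{eq:res4a}, which makes the prefactor $\OO[\vert p\vert^{-2}]$ and the Bromwich integral absolutely convergent, this supplies precisely the uniformity that the paper's appeal to Jordan's lemma takes for granted. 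The one loose end is your remark that when $b^0(\n)=0$ there is ``nothing to prove'': the claim then reads $\G_1(t,y)=0$ for $t<0$, i.e.\ causality, which still requires an argument --- but your own contour estimate, run with the weaker bound $\Re\K_\n(p)\geq 0$ and without any invertibility of $\B_\n$ (using instead that $\K_\n(p)^{-2}\,\Q_\n(p)^{-1}$ stays bounded on the fixed line $\Re p=q_0$ and decays on the arcs by Lemma~\ref{lem:est} applied to $\Q_\n$ alone, or simply by keeping the paper's $\OO[1/p]$ estimate), covers that case as well, so nothing essential is lost.
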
 
\begin{proof}
Consider the integral in \eqref{eq:planewave}. 
Let $t < \vert y\vert \, b^0(\n)$. 
Close the segment $[-\ii R,\ii R]$ of the Bromwich contour by a large half-circle $\vert p \vert = R$, $\Re p \geq 0$
and let $R$ tend to infinity. The only singularities of the integrand are 
the branching cuts of the CBFs $\K_\n(p)$ and $\Q_\n(p)$ along the negative half-axis
and therefore they lie outside the closed contour. 
In the half-plane $\Re p > 0$
\begin{multline*}
\lim_{\vert p \vert \rightarrow \infty} \exp(p\, (t - \vert y\vert\,\B_\n 
-p^{-1}\, \K^{(1)}_\n(p))) = \lim_{\vert p \vert \rightarrow \infty} \exp(p\, (t - \vert y\vert\,\B_\n) = \\ =
\sum_{j=1}^d \lim_{\vert p \vert \rightarrow \infty} \exp(p\, (t - b_j(\n)\, \vert y\vert)
\v_j\, \v_j^\top = 0 
\end{multline*}
because all the exponents have negative real parts. For $\Re p \geq 0$ the exponentials on the right-hand side are bounded functions of $p$.
The integrand of \eqref{eq:planewave} is the product of the exponentials just estimated and 
$\Q_\n(p)^{-1}\, \K_\n(p)^{-1} = \OO[1/p]$ for $p \rightarrow \infty$. The asymptotic estimates are uniform with
respect to $\arg p \in [-\uppi/2,\uppi/2]$ hence, by Jordan's lemma, the integral over the 
half-circle tends to 0 as $R \rightarrow \infty$ while the integral over $[-\ii R, \ii R]$ 
tends to $\G_2(t,y) = 0$. Hence $\G_1(t,y)$ vanishes for $t < b^0(\n)\, \vert y \vert$ 
which implies the thesis. \smartqed
\end{proof} 

Consequently $1/b^0(\n)$ is an upper limit on the propagation speed of a plane-wave front 
with the normal $\n$.

The necessity of evaluating the exponential function of non-Hermitian matrices is 
somewhat discouraging. 
For sufficiently anisotropic small dispersion $\C_\n(\omega) - w(\omega) \, \I_3 = \OO[\varepsilon]$ 
and attenuation $\A_\n(\omega) = z(\omega) \OO[\varepsilon]$, where $w(\omega)$ and $z(\omega)$ are some real functions, the inverse phase function and the attenuation function in \eqref{eq:Z} can be approximately disentangled in the form 
\begin{multline} \label{eq:OUN}
\exp(\ii \omega \C_\n(\omega)\, r - \A_\n(\omega) \, r) = \\
\Phi(\varepsilon\, r, \omega) \,\exp(\ii \omega \C_\n(\omega)\, r) \, \exp(-\A_\n(\omega)\, r)
= \\ \Psi(\varepsilon\, r, \omega)\, \exp(-\A_\n(\omega)\, r) \, \exp(\ii \omega \C_\n(\omega) r)
\end{multline}
where the functions $\Phi(y, \omega), \Psi(y, \omega) = 1 + \OO\left[y^2\right]$ can be expressed in terms of exponential functions of nested commutators of $\C_\n(\omega)$
and $\A_\n(\omega)$
by the use of the Zassenhaus formula \cite{Magnus54}, cf also \cite{CasasAl} for the numerical implementation of the Zassenhaus formula. Substituting the inverse phase function from equations~(\ref{eq:Cn},\ref{eq:Dn}) and the attenuation function from \eqref{eq:An} the functions $\Phi$ and $\Psi$ can be expressed in terms of 
an infinite product of exponentials of nested commutators of the matrices $\ii r \C_\n(\omega)$ and $r \A_\n(\omega)$. 
The matrix $\A_\n(\omega)$ is symmetric 
hence it admits a spectral representation with eigenvalues $a_i(\omega)$ and eigenvectors $\w_i(\omega)$. 
If $\v = \sum_{i=1}^3 v_i\, \w_i(\omega)$ then the the logarithmic attenuation rates 
can be made explicit:
$$\exp(r [\ii \omega \C_\n(\omega) - \A_\n(\omega)]) \, \v = \Phi(r, \omega)\, \exp(\ii \omega r 
\C_\n(\omega)) \sum_{i=1}^3 \exp(-r \, a_{\n,i}(\omega)) \, v_i \, \w_i(\omega)$$

In seismological applications its is more common to identify the wave modes by 
associating them with elastic wave modes. This can be achieved by using the last line of equation~\eqref{eq:OUN} and expanding $\v$ in terms of eigenvectors $\u_i$, $i = 1,2,3$,
of the matrix $\C_\n(\omega)$:
$$\exp(r [\ii \omega \C_\n(\omega) - \A_\n(\omega)]) \, \v = \Psi(r, \omega)\, \exp(-r 
\A_\n(\omega)) \sum_{i=1}^3 \exp(\ii \omega r/c_{\n,i}(\omega)) \, v_i \, \u_i(\omega)$$
where $c_{\n,i}$ are the phase speeds of the modes. 
Note that the modes are coupled by the attenuation function. From the last representation it is easy to see that the wavefront propagation speeds of
the modes are given by $c_{\n,i}^\infty := \lim_{\omega\rightarrow \infty} c_{\n,i}(\omega)$.

\section{The isotropic three-dimensional case}
\label{sec:iso3D}

We now assume that $d = 3$ and
$$\G_\n(t) = (\lambda(t) + 2 \mu(t))  \, \P_\parallel(\n) + \mu(t)\, \P_\perp(\n)$$
where $\lambda(t), \mu(t)$ are LICM functions and 
$\P_\parallel(\n) := \n \, \n^\top$, $\P_\perp(\n) = \I - \P_\parallel(\n)$ for every $\n \in \mathcal{S}$. It is easy to conclude therefrom that
the relaxation modulus $\mathsf{G}$ is a tensor-valued LICM function.

In this case
$$\Q_\n(p)^{-1/2}  = (\overline{\lambda}(p) + 2 \overline{\mu}(p))^{-1/2} \, \P_\parallel(\n) + 
\overline{\mu}(p)^{-1/2}\, \P_\perp(\n)$$
where $\overline{\lambda}(p) := p \, \tilde{\lambda}(p)$ and $\overline{\mu}(p) := p \, \tilde{\mu}(p)$
are CBFs.
Furthermore
$$\K_\n(p) = \kappa_\parallel(p) \, \P_\parallel(\n) + \kappa_\perp(p)\, \P_\perp(\n),$$
where 
\begin{gather}
\kappa_\parallel(p) := \rho^{1/2} p/(\overline{\lambda}(p) + 2 \overline{\mu}(p))^{1/2}\\
\kappa_\perp(p) := \rho^{1/2} p/\overline{\mu}(p)^{1/2}
\end{gather}
are again CBFs \cite{HanWM2013}. Applying equation~\eqref{eq:sfunspec} we obtain the formula 
\begin{multline*}
\G(t,\x) = 
-\frac{1}{2 (2 \uppi)^3\, \rho} \times \\ \frac{\partial^2}{\partial y^2} \int_\mathcal{B} \frac{\dd p}{p^2} 
\int_{\mathcal{S}_+} \left[ \kappa_\parallel(p)\,  \e^{-y \,\kappa_\parallel(p)} \, \P_\parallel(\n) + 
\kappa_\perp(p) \,\e^{-y \,\kappa_\perp(p)} \, \P_\perp(\n) \right]_{y=\n\cdot \x} \, \Lambda(\dd \n) = \\
 -\frac{1}{2 (2 \uppi)^3\, \rho} \int_\mathcal{B} \frac{\dd p}{p^2} 
\int_{\mathcal{S}_+} \left[ \kappa_\parallel(p)^3\,  \e^{-\n\cdot\x \,\kappa_\parallel(p)} \, \P_\parallel(\n) + 
\kappa_\perp(p)^3\,  \e^{-\n\cdot\x \,\kappa_\perp(p)} \, \P_\perp(\n) \right] \, \Lambda(\dd \n)
\end{multline*}
Integration over $\mathcal{S}_+$ can be  explicitly carried out:
\begin{multline*}
\int_{\mathcal{S}_+}  \e^{-\n\cdot\x \, \kappa_\parallel(p) } \,\P_\parallel(\n) \, \Lambda(\dd \n) = 
\kappa_\parallel(p)^{-2} \, \nabla \nabla^\top \int_{\mathcal{S}_+}  \e^{-\n\cdot\x \, \kappa_\parallel(p) } \, \Lambda(\dd \n) = \\
\frac{2 \uppi}{\kappa_\parallel(p)^2} \, \nabla \nabla^\top \, \int_0^1 \e^{-r z \, \kappa_\parallel(p)} \, \dd z =
\frac{2 \uppi}{\kappa_\parallel(p)^3} \, \nabla \nabla^\top \, \frac{\e^{-r \, \kappa_\parallel(p)}}{r} 
\end{multline*}
and
$$\int_{\mathcal{S}_+} \e^{- \n\cdot\x\, \kappa_\perp(p)} \, \Lambda(\dd \n) = \frac{2 \uppi}{\kappa_\perp(p)}\,
\left[\I - \kappa_\perp(p)^{-2} \nabla \nabla^\top\right] \frac{\e^{-r \, \kappa_\perp(p)}}{r} $$

We now note that the CBF functions $\kappa_\parallel(p)$ and $\kappa_\perp(p)$ can be expressed in the form
\begin{gather}
\kappa_\parallel(-\ii \omega) = -\ii \omega/c_\parallel(\omega) + \mathcal{A}_\parallel(\omega)\\
\kappa_\perp(-\ii \omega) = -\ii \omega/c_\perp(\omega) + \mathcal{A}_\perp(\omega)
\end{gather}
where $r := \vert \x \vert$, $c_\parallel(\omega)$ and $c_\perp(\omega)$ represent the phase speeds of the longitudinal and transverse modes, while 
$\mathcal{A}_\parallel(\omega),\mathcal{A}_\perp(\omega) \geq 0$ are the corresponding attenuation functions. 

Changing the integration variable according to the formula  $p = -\ii \omega$ 
\begin{multline} \label{eq:last}
\G(t,\x) = -\frac{1}{2 (2 \uppi)^3\, \rho} \\ \times \int_{-\infty}^\infty \Big\{ \left[\frac{\rho}{\overline{\lambda}(-\ii \omega) 
+ 2 \overline{\mu}(-\ii \omega)}\right]^{1/2} \,
\kappa_\parallel(-\ii \omega)^{-2} \, \nabla \nabla^\top \, \left[r^{-1} \, \e^{-\ii \omega (t - r/c_\parallel(\omega)) - r\,
\mathcal{A}_\parallel(\omega)}\right] \\ +  \left[\frac{\rho}{\overline{\mu}(-\ii \omega)}\right]^{1/2}
\left[\I - \kappa_\perp(-\ii \omega)^{-2} \nabla \nabla^\top\right] \, \left[r^{-1} \, \e^{-\ii \omega (t - r/c_\perp(\omega)) - r\,
\mathcal{A}_\perp(\omega)}\right] \Big\} \, \dd \omega  
\end{multline}

The phase speeds tend to the values $c_\parallel^0 := \sqrt{(\lambda(0) +2 \mu(0))/\rho}$ and 
$c_\perp^0 := \sqrt{\mu(0)/\rho}$ for $\omega \rightarrow \infty$ and $c_\parallel(p)^{-1} = 1/c_\parallel^0 + o[1]$, 
$c_\perp(p)^{-1} = 1/c_\perp^0 + o[1]$. The numbers $c_\parallel^0$ and $c_\perp^0$ constitute the upper bounds on
the corresponding wave speeds. Using Jordan's Lemma it is then possible to prove that 
the longitudinal wavefield, represented by the first integral in \eqref{eq:last}, vanishes for $t < r/c_\parallel^0$ and the transverse field, represented by the second integral, vanishes for $t < r/c_\perp^0$ \cite{HanWM2013}. The spheres $r/c_\parallel^0 - t = 0$
and $r/c_\perp^0 - t = 0$ constitute the wavefronts of $\G(t,\x)$.

Regularity of Green´s function $\G(t,\x)$ at the wavefronts is controlled by the behavior of the attenuation function at infinity
\cite{HanJCA,HanUno,HanDue}. For example, if $\mathcal{A}_\parallel(\omega) \rightarrow \mathcal{A}_\parallel^\infty$ for $\omega \rightarrow \infty$ 
then the wavefront of the longitudinal wave can carry a finite jump, while
if $\mathcal{A}_\parallel(\omega) = \OO_\infty\left[ a\, \omega^\alpha\right]$, with $0 < \alpha < 1$, $a > 0$, then the longitudinal wavefield
is infinitely smooth at the wavefront.

\section{An example of an anisotropic medium - the TI medium}
\label{sec:TI}

In a general anisotropic medium an eigenvalue of the matrix $\K_\n(p)$, 
$p \in \mathbb{R}_+$, need not be a CBF except if the corresponding eigenvector 
is constant.

\begin{theorem}  \label{thm:except}
If a unit eigenvector $\f$ of $\K_\n(p)$, $p \in \mathbb{R}_+$, is independent of $p$, then 
the corresponding eigenvalue $\kappa(p)$ is a CBF.
\end{theorem}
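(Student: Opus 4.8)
The plan is to exploit the spectral representation \eqref{eq:6a} of $\K_\n(p)$ together with the defining property \eqref{eq:0} of a matrix-valued CBF, and to show that restriction to the fixed one-dimensional subspace spanned by $\f$ preserves that property. First I would observe that $\K_\n(\cdot)$ is a matrix-valued CBF by Theorem~\ref{thm:K}, so it admits an analytic continuation to $\mathbb{C}\setminus\,]-\infty,0]$ satisfying $\Im z\,\Im\K_\n(z)\geq 0$, with $\lim_{x\to 0+}\K_\n(x)$ real. Since by hypothesis $\f$ is a unit eigenvector of $\K_\n(p)$ for every $p\in\mathbb{R}_+$, define $\kappa(p):=\f^\dag\,\K_\n(p)\,\f=\f^\top\,\K_\n(p)\,\f$; on the positive real axis this equals the eigenvalue attached to $\f$. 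The scalar function $\kappa$ inherits an analytic continuation to $\mathbb{C}\setminus\,]-\infty,0]$ simply as $z\mapsto\f^\top\,\K_\n(z)\,\f$, and $\lim_{x\to 0+}\kappa(x)=\f^\top\big(\lim_{x\to 0+}\K_\n(x)\big)\f$ is real.

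Next I would verify the imaginary-part inequality. For $\Im z\neq 0$ we have $\Im\kappa(z)=\f^\dag\,\Im\K_\n(z)\,\f$, and since $\Im\K_\n(z)$ is Hermitian with $\Im z\,\Im\K_\n(z)\geq 0$, it follows that $\Im z\,\Im\kappa(z)=\f^\dag\big(\Im z\,\Im\K_\n(z)\big)\f\geq 0$. This is exactly the scalar CBF condition (cf.\ Theorem~6.2 in \cite{BernsteinFunctions}); together with the existence of a real limit at the origin, it establishes that $\kappa$ is a scalar CBF. One should also remark that $\kappa$ is not identically constant unless $\K_\n$ is, so the statement is non-vacuous; and that $\kappa(0)=0$ since $\K_\n(0)=0$, which is consistent with the integral form for a scalar CBF vanishing at the origin.

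The one genuinely delicate point — and the step I expect to be the main obstacle — is justifying that $z\mapsto\f^\top\,\K_\n(z)\,\f$ really is \emph{the} analytic continuation of the eigenvalue $\kappa(p)$, rather than of some other branch. On $\mathbb{R}_+$ the equality $\f^\top\K_\n(p)\f=\kappa(p)$ is immediate from $\K_\n(p)\f=\kappa(p)\f$ and $|\f|=1$; since $\K_\n$ is analytic on the connected domain $\mathbb{C}\setminus\,]-\infty,0]$ and the two functions agree on the interval $\mathbb{R}_+$, they agree everywhere by the identity theorem. So the continuation is forced, and no branch ambiguity arises. What I would emphasise is that the hypothesis ``$\f$ independent of $p$'' is doing essential work here: without it, $\f^\top\K_\n(z)\f$ would still be analytic but would have no reason to coincide with a single eigenvalue of $\K_\n(z)$ off the real axis, and indeed — as the surrounding discussion notes — an eigenvalue with $p$-dependent eigenvector generically fails to be a CBF. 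Finally, for completeness one may record the Stieltjes-type consequence: since $\kappa$ is a scalar CBF with $\kappa(0)=0$, it has the representation $\kappa(p)=b\,p+p\int_{]0,\infty[}(p+r)^{-1}\,m(\dd r)$ with $b=\f^\top\B_\n\f\geq 0$ and $m(\dd r)=\big(\f^\top\M_\n(r)\f\big)\,\mu(\dd r)$ a nonnegative measure obtained by testing the matrix representation \eqref{eq:6a} against $\f$, so that the scalar theory of \cite{HanWM2013} applies to this mode. \smartqed
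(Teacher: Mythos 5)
Your proof is correct, and it in fact contains two complete arguments. Your lead argument works directly from Definition~\ref{def:CBF}: since $\f$ is a fixed real unit eigenvector, $\kappa(z)=\f^\top\K_\n(z)\f$ on $\mathbb{R}_+$, the continuation is forced by the identity theorem, and contracting the Herglotz-type inequality $\Im z\,\Im\K_\n(z)\geq 0$ with $\f$ gives $\Im z\,\Im\kappa(z)\geq 0$, which together with the real limit at $0+$ is the scalar CBF criterion. The paper instead goes through the integral representation: it applies Theorem~\ref{thm:CBF} to write $\K_\n(p)$ in the form \eqref{eq:6a}, contracts with $\f$ to get $\kappa(p)=\f^\top\B_\n\f+p\int_{]0,\infty[}(p+r)^{-1}\,\f^\top\M_\n(r)\f\,\mu(\dd r)$ with nonnegative coefficient and density, and concludes from the representation theorem --- which is exactly the remark you append ``for completeness,'' so the paper's entire proof is already present in your last sentences. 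The two routes are equivalent characterizations of complete Bernstein functions; the definitional route is marginally more economical (it never invokes the structure theorem), while the representation route has the advantage of immediately exhibiting the scalar measure $m(\dd r)=\f^\top\M_\n(r)\f\,\mu(\dd r)$ and the coefficient $b=\f^\top\B_\n\f$ that feed into the one-dimensional theory of \cite{HanWM2013}. Your explicit discussion of why $\f^\top\K_\n(z)\f$ is \emph{the} continuation of the eigenvalue, and of where the constancy of $\f$ enters, is a genuine clarification that the paper leaves implicit.
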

\begin{proof}
Using Theorem~\ref{thm:CBF} 
$$\kappa(p) =  \f^\top \, \K_\n(p)\, \f = 
\f^\top \, \B_\n\, \f + p \int_{]0,\infty[} \frac{\f^\top\, \M_\n(r) \, \f}{p + r} \mu(\dd r)$$
with $\f^\top \, \B_\n\, \f \geq 0$ and 
$\f^\top\, \M_\n(r) \, \f \geq 0$ $\mu-$almost everywhere, hence 
by the same theorem, $\kappa$ is a CBF. \smartqed
\end{proof} 

The simplest possible example of an anisotropic medium - the transversely isotropic medium
with the symmetry axis $\ee$ independent of the inverse relaxation time $r$ -- illustrates the problems associated with anisotropy.

In a coordinate system in which the symmetry axis $\ee = [0,0,1]^\top$ 
the tensor $\Q(p)$ can be expressed in the Voigt notation as 
$$\Q(p) = \left[ \begin{array}{cccccc}
q_{11}(p) & q_{12}(p) & q_{13}(p) & 0 & 0 & 0 \\
q_{12}(p) & q_{11}(p) & q_{13}(p) & 0 & 0 & 0 \\
q_{13}(p) & q_{13}(p) & q_{33}(p) & 0 & 0 & 0 \\
0 & 0 & 0 & q_{44}(p) & 0 & 0 \\
0 & 0 & 0 & 0 & q_{44}(p) & 0 \\
0 & 0 & 0& 0 & 0 & q_{66}(p)
\end{array}\right]
$$
where $q_{12}(p) = q_{11}(p) - 2 q_{66}(p)$ \cite{Fedorov}. 

It is assumed that $\Q(p)$ is a CBF, hence in particular the diagonal elements $q_{11}(p), q_{33}(p), q_{44}(p)$ and $q_{66}(p)$ are CBFs (use Theorem~\ref{thm:CBF}). The matrix-valued function $\Q(p)$ and its component functions are defined on the cut complex $p$-plane.

Many formulae known from the theory of elastic waves in anisotropic media carry over to the viscoelastic case by replacing the stiffness coefficients $c_{ijkl}$ by 
the functions $q_{klmn}(p)$. We shall therefore use the book of Fedorov 
\cite{Fedorov} as a source of such formulae. 
 
The matrices $\Q_\n(p)$ and $\Q_\n(p)^{-1}$ can be expressed as linear combinations
of the matrices $\I$, $\n\, \n^\top$, $\ee \,\ee^\top$ and $\f \,\f^\top$, where 
$\f = \left(1 - (\ee\cdot \n)^2\right)^{-1/2}\, \ee \times \n$ (a vector product)
with coefficients which are algebraic functions of $q_{kl}(p)$ (\cite{Fedorov},
Sec.~32). By analytic continuation these expressions are defined on the cut complex $p$-plane. The vector $\f$ is a unit eigenvector of $\Q_\n(p)$. It defines the polarization of the transverse wave. The corresponding eigenvalue $\kappa^{\mathrm{T}}(p)$ is a CBF.
Consequently its analytic continuation to the imaginary axis has the form
$$\kappa^{\mathrm{T}}(-\ii \omega) = -\ii \omega/c^{\mathrm{T}}(\omega) 
+ \mathcal{A}^{\mathrm{T}}(\omega)$$ with $\mathcal{A}^{\mathrm{T}}(\omega) \geq 0$. The 
functions $\kappa^{\mathrm{T}}(p)$, $\mathcal{D}^{\mathrm{T}}(\omega)$ 
and $\mathcal{A}^\mathrm{T}(\omega)$ enjoy all the properties of the corresponding 
functions in the scalar viscoelasticity \cite{HanWM2013}. We note that
$$q^\mathrm{T}(p) = q_{66}(p) \, \left(1 - (\ee\cdot\n)^2\right) + q_{44}(p)\,
(\ee \cdot\n)^2$$
and $\kappa^{\mathrm{T}}(p) = \rho^{1/2}\,p/q^{\mathrm{T}}(p)$.

Let $\K^\prime_\n(p)$ denote the matrix $\K_\n(p)$ in the adjusted coordinate system 
chosen in such a way that $\f^\prime = [0,0,1]^\top$, $\K^\prime_\n(p) =
\mathbf{R}\, \K_\n(p)\, \mathbf{R}^{-1}$, where $\mathbf{R}$ is a rotation.
We then have 
$$\K^\prime_\n(p) = \left[ \begin{array}{cc} \K^\perp_\n(p)  & 0 \\ 
0 & \kappa^{\mathrm{T}}(p)  \end{array} \right] $$
where $\K^\perp(p)$ denotes the projection of $\K^\prime_\n(p)$ onto the plane spanned by the vectors $\n$ and $\ee$.  We now note that the matrix-valued
function $\K^\prime_\n(p)$ is a CBF. Indeed, $\mathbf{R}^\dag = \mathbf{R}^{-1}$, hence 
\begin{multline*}
\Im \K^\prime_\n(p) = \frac{1}{2 \ii} \left( \mathbf{R}\, \K_\n(p) \, \mathbf{R}^{-1}
- \left[\mathbf{R}\, \K_\n(p) \, \mathbf{R}^{-1}\right]^\dag\right) = \\
 \frac{1}{2 \ii} \mathbf{R}\, \left[\K_\n(p) - \K_\n(p)^\dag\right] \, \mathbf{R}^{-1}
= \mathbf{R}\, \Im \K_\n(p) \mathbf{R}^{-1}
\end{multline*}
But $\Im p \, \Im \K_\n(p) \geq 0$, hence $\Im p \,\Im \K^\prime_\n(p) \geq 0$, q. e. d.

$\kappa^{\mathrm{T}}(p)$ is a CBF, hence 
$\Im p \, \Im \kappa^{\mathrm{T}}(p) \geq 0$.
Consequently $\Im p \, \Im \K^\prime_\n(p) \geq 0$ if and only if $\Im p \, \Im \K^\perp_\n(p) \geq 0$. 
Since $\K^\prime_\n(p)$ is a CBF, this implies that $\K^\perp_\n(p)$ is a $2\times 2$ matrix-valued CBF. 

The theory developed above for matrix-valued $\K_\n(p)$ can now be applied to $\K^\perp_\n(p)$. All the formulae remain valid in the cut complex $p$-plane, in particular on 
the imaginary axis $p = -\ii \omega$. The theory of matrix-valued attenuation and dispersion developed in the
previous sections can now be applied to $\K^\perp_\n(p)$.

\begin{lemma}
If the $3 \times 3$ complex matrix
$$\A = \left[ \begin{array}{cc}
\B & 0 \\
0 & b \end{array} \right]$$
where $\B$ is a $2 \times 2$ complex matrix, $b \in \mathbb{C}$ and $f$ is an analytic function, then
\begin{equation} \label{eq:dec}
f(\A) = \left[ \begin{array}{cc} f(\B) & 0 \\ 0 & f(b) \end{array} \right]
\end{equation}
\end{lemma}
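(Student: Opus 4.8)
The plan is to use the Cauchy integral representation \eqref{eq:Cauchy} for $f(\A)$ and exploit the fact that the resolvent of a block-diagonal matrix is block-diagonal. First I would observe that since $\A$ is block diagonal with blocks $\B$ and $b$, for any scalar $s$ not in the spectrum of $\A$ (equivalently, $s$ not in $\mathrm{spec}(\B)$ and $s \neq b$) one has
\begin{equation*}
(s\,\I_3 - \A)^{-1} = \left[ \begin{array}{cc} (s\,\I_2 - \B)^{-1} & 0 \\ 0 & (s - b)^{-1} \end{array} \right],
\end{equation*}
which is immediate from the definition of matrix inverse applied block-wise (the off-diagonal blocks of $s\I_3 - \A$ vanish, so the inverse is the direct sum of the inverses of the diagonal blocks).

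Next I would choose a closed contour $\Gamma$ in the domain of analyticity of $f$ that encircles $\mathrm{spec}(\A) = \mathrm{spec}(\B) \cup \{b\}$ exactly once in the positive direction; such a contour exists because the spectrum is a finite set of points and $f$ is analytic on a neighbourhood of it. Substituting the block-diagonal resolvent into \eqref{eq:Cauchy} gives
\begin{equation*}
f(\A) = \frac{1}{2\uppi\ii}\int_\Gamma \left[ \begin{array}{cc} (s\,\I_2 - \B)^{-1} & 0 \\ 0 & (s - b)^{-1} \end{array} \right] f(s)\,\dd s = \left[ \begin{array}{cc} \dfrac{1}{2\uppi\ii}\displaystyle\int_\Gamma (s\,\I_2 - \B)^{-1} f(s)\,\dd s & 0 \\ 0 & \dfrac{1}{2\uppi\ii}\displaystyle\int_\Gamma (s - b)^{-1} f(s)\,\dd s \end{array} \right],
\end{equation*}
where I have used that integration commutes with taking the (fixed) block structure. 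The same contour $\Gamma$ encircles $\mathrm{spec}(\B)$ once and encircles the single point $b$ once, so the two diagonal entries are exactly $f(\B)$ (again by \eqref{eq:Cauchy}, now for the $2\times2$ matrix $\B$) and $f(b)$ (by the ordinary Cauchy integral formula). This yields \eqref{eq:dec}.

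The only point requiring a little care — and the main thing to get right rather than a genuine obstacle — is the choice of the contour $\Gamma$: one must ensure a single contour simultaneously satisfies the winding-number hypothesis of \eqref{eq:Cauchy} for $\A$, for $\B$, and for the scalar $b$. Since $\mathrm{spec}(\A)$ is finite and contained in the (open) region where $f$ is analytic, one can take $\Gamma$ to be a finite union of small positively oriented circles, one around each distinct eigenvalue, chosen small enough to be disjoint and to avoid the rest of the spectrum; this contour then works for all three invocations of the functional calculus at once. Everything else is the routine block-matrix algebra indicated above.
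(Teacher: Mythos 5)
Your proposal is correct and follows essentially the same route as the paper: block-diagonalize the resolvent $(s\I - \A)^{-1}$, substitute into the Cauchy integral formula~\eqref{eq:Cauchy}, and identify the two diagonal blocks as $f(\B)$ and $f(b)$. The extra care you take with the choice of a single contour $\Gamma$ valid for all three invocations of the functional calculus is a point the paper passes over silently, and it is handled correctly.
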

\begin{proof}
Note that
$$(s \I - \A)^{-1} = \left[ \begin{array}{cc} (s\, \I_2 - \B)^{-1} & 0 \\
0 & (s - b)^{-1}\end{array}\right]$$
where $\I_2$ denotes the $2\times 2$ unit matrix and $\Gamma$ encircles the 
spectrum of $\A$ in the positive direction
Hence 
\begin{multline*}
f(\A) = \frac{1}{2 \uppi \ii} \int_\Gamma (s \I - \A)^{-1}\, f(s) \, \dd s = \\
\frac{1}{2 \uppi \ii}\left[ \begin{array}{cc}  \int_\Gamma (s \I_2 - \B)^{-1}\, f(s)\, \dd s & 0 \\
0 & \int_\Gamma (s - b)^{-1}\, f(s) \,\dd s\end{array} \right]
\end{multline*}
The spectrum of $\A$ consists of the spectrum of $\B$ and the number $b$.
Hence equation~\eqref{eq:dec} follows. \smartqed
\end{proof}

From the above lemma we conclude that
$$\e^{-\K_\n(p) \, y} = \left[ \begin{array}{cc} \exp(-\K^\perp_\n(p)\, y) & 0 \\
0 & \exp(-\kappa^{\mathrm{T}}(p) \, y) \end{array}\right]$$
The factor $\K_\n(p)^{-1}\, \Q_\n(p)^{-1}$ in \eqref{eq:planewave} can similarly be recast in a block matrix form.
Consequently every plane wave is represented in the block form in a coordinate system in which $\f = [0,0,1]^\top$. In the adjusted coordinate system 
the attenuation and dispersion of the quasi-longitudinal and quasi-transverse 
plane wave are represented by $2\times 2$ matrix-valued functions. 

Before summing over the plane-wave wavefront normals $\n$ the expressions obtained above 
must be rotated to a fixed coordinate system. Suppose again that in this coordinate system
$\ee = [0,0,1]^\top$. A coordinate system in which the matrices $\K_\n(p)$
and $\Q_\n(p)$ have block form is obtained by a rotation $\mathbf{R}$ which transforms
the vector $\f$ to $\ee$: $\mathbf{R} \f = \ee$. A rotation around the axis $\mathbf{a} := 
\ee \times \f$ through $90^{\mathrm{o}}$ in the positive direction satisfies this condition. The opposite rotation $\mathbf{R}^{-1}$ restores the original coordinate system.
Thus in the original coordinate system
\begin{equation} \label{eq:last}
\K_\n(p) = \mathbf{R}^{-1} \, \left[ 
\begin{array}{cc} \K^\perp_\n(p) & 0 \\ 0 & \kappa^{\mathrm{T}}(p) \end{array} \right] \, \mathbf{R} 
\end{equation}

From the last result we can derive the block-form expressions for phase speed, attenuation and dispersion. If $\B^\perp_\n := \lim_{p\rightarrow \infty} \K^\perp_\n(p)/p$ and 
$b^{\mathrm{T}} := \lim_{p\rightarrow\infty} \kappa^{\mathrm{T}}(p)/p$, then
$$\B_\n = \mathbf{R}^{-1} \, \left[ 
\begin{array}{cc} \B^\perp_\n & 0 \\ 0 & b^{\mathrm{T}} \end{array} \right] \, \mathbf{R}$$
Similarly
$$\A_\n(\omega) = \mathbf{R}^{-1} \, \left[ 
\begin{array}{cc} \A^\perp_\n(\omega) & 0 \\ 0 & a^{\mathrm{T}}(\omega) \end{array} \right] \, \mathbf{R},$$
where $\A^\perp_\n(\omega) := \Re \K^\perp_\n(-\ii \omega) \geq 0$, 
$a^{\mathrm{T}}(\omega) := \Re \kappa^{\mathrm{T}}(-\ii \omega) \geq 0$
and
$$\mathbf{D}_\n(\omega) = \mathbf{R}^{-1} \, \left[ 
\begin{array}{cc} \mathbf{D}^\perp_\n(\omega) & 0 \\ 0 & d^{\mathrm{T}}(\omega) \end{array} \right] \, \mathbf{R}$$. 

In viscoelastic media with lower symmetry the eigenvectors of the matrix $\K_\n(-\ii \omega)$ 
in general depend on frequency except for waves propagating along symmetry axes. This entails
that attenuation couples all the modes for wavefront normals not directed along acoustic axes. 

In the case of the wavefront normal $\n$ directed along a symmetry axis 
$L^2$ in a monoclinic medium \cite{Fedorov} one of the three polarizations  
in an elastic medium is parallel to the axis, while two other polarizations lie in the mirror symmetry 
plane (the plane orthogonal to the symmetry axis). The directions of the latter depend on 
the stiffness tensor. In a viscoelastic medium the stiffness tensor depends on frequency 
and as a result the directions of the two transverse waves also depend on frequency. This entails that  the corresponding eigenvalues of the matrix $\K_\n(p)$ are not complete Bernstein functions. Consequently the two transverse waves are in general coupled by the attenuation function. 
In the case of a symmetry axis of a higher order $L^k$, $k > 2$, the entire symmetry plane is an eigenspace of $\K_\n(p)$ 
and transverse polarizations can lie anywhere in that plane. In this case the dependence on frequency disappears and the two transverse waves are not coupled. 

\section{Concluding remarks.}

Green's function is a superposition of plane waves. The analysis of its structure 
can thus be reduced to the analysis of the plane waves.

Anisotropic effects in elastic plane wave propagation are represented by a single tensor, viz. the inverse phase speed tensor (or, equivalently, by the acoustic tensor). The 
eigenvectors of these tensors represent the polarizations of three independently propagating modes.

Viscoelastic plane waves additionally involve the attenuation function which is tensor-valued and does not in general commute with the inverse phase speed tensor. The
attenuation function couples the three modes defined by the eigenvectors of the inverse phase speed tensor. In order to account for attenuation the three modes of 
a plane wave have to be considered jointly. Only 
those modes whose polarization vectors are independent of frequency can be decoupled 
from the other modes in an adjusted coordinate system.  

In the isotropic case all the polarization vectors are independent of frequency and the 
longitudinal mode and the transverse mode decouple so that Green's function can be expressed as a superposition of two independent spherical waves. The two waves have 
many properties of the scalar viscoelastic waves. Regularity of the wavefields at their respective wavefronts can be studied by the methods of \cite{HanJCA}.

The theory of matrix-valued CBFs outlined in this paper allows an identification
of the matrix-valued attenuation function and a deep analysis of its properties. 
To our best knowledge such a theory has not been developed before.  
Our construction of matrix-valued attenuation is a generalization of the method presented in \cite{HanWM2013}
in the context of scalar viscoelastic wave propagation. In case of need the 
anisotropic aspects of the  high- and low-frequency asymptotics of the attenuation function.

Applications of our representation of the viscoelastic wave field depend on an efficient, accurate and stable evaluation of matrix exponential. The last topic is discussed in detail in \cite{MolerVanLoan}.

\appendix  

\section{Some functions of matrix arguments.}
\label{app:matrices}

If $f$ is a (scalar) CBF then there are two non-negative real numbers $a, b$ and a positive Radon 
measure $\rho$ on $\mathbb{R}_+$ such that
$$f(x) = a + b \, x + \int_{]0,\infty[} \frac{x}{s + x} \rho(\dd s)$$
Using this formula the function $f$ can be extended to $\mathcal{M}^\mathbb{C}$:
\begin{equation} \label{eq:defbyCBF}
f(\B) := a \, \I + b \, \B + \int_{]0,\infty[} \B\, (s \, \I + \B)^{-1} \, \rho(\dd s)
\end{equation} 
 \cite{Schilling11}. 
Schilling has proved the following statement:\\
If $f, g$ and their pointwise product $f\, g$ are CBFs then 
$(f\, g)(\B) = f(\B)\, g(\B)$ for every matrix $\B \in \mathcal{M}^\mathbb{C}$.

\begin{lemma}
\begin{equation} \label{eq:a1}
x^\alpha = \frac{\sin(\alpha\,\uppi)}{\uppi} \int_0^\infty \frac{x \, s^{\alpha-1}}{x + s} 
\dd s \qquad \text{for $x \geq 0$}
\end{equation}
\end{lemma}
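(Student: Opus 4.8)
The plan is to reduce the identity to the classical Beta-function integral. Since the integrand behaves like $s^{\alpha-1}$ as $s \to 0+$ and like $s^{\alpha-2}$ as $s \to \infty$, the integral converges precisely when $0 < \alpha < 1$, which I take to be the standing hypothesis; the case $x = 0$ is trivial because both sides then vanish. For $x > 0$ I would substitute $s = x u$, $\dd s = x\,\dd u$, obtaining
\[
\int_0^\infty \frac{x\, s^{\alpha-1}}{x+s}\,\dd s = x^\alpha \int_0^\infty \frac{u^{\alpha-1}}{1+u}\,\dd u,
\]
so the lemma follows once we establish $\int_0^\infty u^{\alpha-1}/(1+u)\,\dd u = \uppi/\sin(\alpha\uppi)$.

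For this remaining definite integral I would use a keyhole contour. Take $f(z) = z^{\alpha-1}/(1+z)$ with the branch of $z^{\alpha-1}$ cut along the positive real axis and $0 < \arg z < 2\uppi$, and integrate $f$ around the contour made up of a large circle of radius $R$, a small circle of radius $\varepsilon$ around the origin, and the two sides of the cut. Because $0 < \alpha < 1$, the contributions of the two circular arcs tend to zero as $R \to \infty$ and $\varepsilon \to 0$. On the upper side of the cut $z = u$ gives $\int_0^\infty u^{\alpha-1}/(1+u)\,\dd u$, while on the lower side $z = u\,\e^{2\uppi\ii}$ contributes $-\e^{2\uppi\ii(\alpha-1)}\int_0^\infty u^{\alpha-1}/(1+u)\,\dd u$. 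The only enclosed pole is at $z = -1 = \e^{\uppi\ii}$, with residue $\e^{\uppi\ii(\alpha-1)}$, so the residue theorem yields
\[
\bigl(1 - \e^{2\uppi\ii(\alpha-1)}\bigr)\int_0^\infty \frac{u^{\alpha-1}}{1+u}\,\dd u = 2\uppi\ii\,\e^{\uppi\ii(\alpha-1)}.
\]
Dividing and using $\e^{-\uppi\ii(\alpha-1)} - \e^{\uppi\ii(\alpha-1)} = -2\ii\sin(\uppi(\alpha-1)) = 2\ii\sin(\uppi\alpha)$ gives $\int_0^\infty u^{\alpha-1}/(1+u)\,\dd u = \uppi/\sin(\uppi\alpha)$, which completes the proof. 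Alternatively, one may recognise the integral as $B(\alpha,1-\alpha) = \Gamma(\alpha)\,\Gamma(1-\alpha)$ and quote Euler's reflection formula.

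The only point requiring care is the bookkeeping of $\arg z^{\alpha-1}$ across the cut, which produces the crucial factor $1 - \e^{2\uppi\ii(\alpha-1)}$; the estimates on the two circular arcs and the substitution $s = xu$ are entirely routine. Since the identity is completely standard, an alternative would be simply to cite it and record only the scaling reduction and the trivial $x = 0$ case, which are the sole features specific to the form stated here.
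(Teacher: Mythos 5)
Your proof is correct, but it reaches the key identity by a different route than the paper. You first pull out the $x$-dependence by the scaling $s = xu$, reducing everything to the single constant $\int_0^\infty u^{\alpha-1}(1+u)^{-1}\,\dd u$, which you then evaluate by a keyhole contour around the cut (your bookkeeping of the branch of $z^{\alpha-1}$, the residue $\e^{\ii\uppi(\alpha-1)}$ at $z=-1$, and the arc estimates for $0<\alpha<1$ are all right). The paper instead rewrites \eqref{eq:a1} as the equivalent identity \eqref{eq:a2} and evaluates the Stieltjes transform of $s^{\alpha-1}$ as an iterated Laplace transform: writing $(x+s)^{-1}=\int_0^\infty \e^{-(x+s)y}\,\dd y$ and integrating in $s$ first gives $\Gamma(\alpha)\int_0^\infty y^{-\alpha}\e^{-xy}\,\dd y = \Gamma(\alpha)\Gamma(1-\alpha)\,x^{\alpha-1}$, after which Euler's reflection formula supplies the factor $\uppi/\sin(\alpha\uppi)$. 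Your second, parenthetical alternative (recognising the integral as $B(\alpha,1-\alpha)$ and quoting the reflection formula) is essentially the paper's argument in disguise. The trade-off: your contour computation is self-contained and derives the $\uppi/\sin(\alpha\uppi)$ constant from scratch, whereas the paper's Laplace-transform route is shorter, quotes the reflection formula, and sits more naturally alongside the Stieltjes-transform machinery used throughout the paper. You are also right to flag the standing hypothesis $0<\alpha<1$ for convergence, which the paper leaves implicit (it only ever uses $\alpha=1/2$).
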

\begin{proof}
Identity~\eqref{eq:a1} is equivalent to the identity
\begin{equation} \label{eq:a2}
x^{\alpha-1} = \frac{\sin(\alpha\,\uppi)}{\uppi} \int_0^\infty \frac{s^{\alpha-1}}{x + s} 
\dd s \qquad\text{ for $x \geq 0$}
\end{equation} 

In order to prove the last identity we use the fact that the Stieltjes transform is an iterated 
Laplace transform and
$$\int_0^\infty s^{\alpha-1} \, \e^{-y s}\, \dd s = y^{\alpha-1} \, \int_0^\infty 
z^{\alpha-1}\, \e^{-z} \, \dd z = \Gamma(\alpha)\, y^{\alpha-1}$$
$$\Gamma(\alpha) \int_0^\infty y^{-\alpha} \, \e^{-x y} \, \dd y = \Gamma(\alpha)\,
\Gamma(1-\alpha) \, x^{\alpha-1} = \frac{\uppi}{\sin(\alpha\, \uppi)} x^{\alpha-1}$$ 
which proves \eqref{eq:a2}. \smartqed
\end{proof} 

This provides us with an integral definition of the square root $\B^{1/2}$:
\begin{equation} \label{eq:a3}
\B^{1/2} = \uppi \int_0^\infty \B\, (s\,\I + \B)^{-1}\, 
s^{-1/2}\, \dd s
\end{equation}
provided that $\B$ has no non-positive real eigenvalue.
We now note that the functions $f(x) = g(x) = x^{1/2}$ and $f\, g$ are CBFs, 
and the definition \eqref{eq:a3} has the form of equation~\eqref{eq:defbyCBF},
hence
\begin{equation} \label{eq:sqroot} 
\B^{1/2} \, \B^{1/2} = \B
\end{equation}
 as expected. 

The square roots of a matrix $\B$ are usually defined as solutions $\mathbf{Y}$ of the equation
\begin{equation} \label{eq:rooteq}
\mathbf{Y}^2 - \B = 0
\end{equation}
If a matrix $\B \in \mathcal{M}_d^\mathbb{C}$ has no non-positive real eigenvalue
then it has a unique square root with the property that its eigenvalues lie in the
open right half of the complex plane \cite{Higham87}. This particular square root is called the 
principal square root.
An algorithm for numerical calculation of the principal square root can be found in \cite{Meini}.

\begin{lemma} \label{lem:sqr}
If $\Re \B > 0$  
then the square root $\mathbf{Y} = \B^{1/2}$ defined by equation~\eqref{eq:a3} is the principal 
square root. 
\end{lemma}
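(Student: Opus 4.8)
The strategy is to use that $\mathbf{Y}^2 = \B$, which has already been obtained in \eqref{eq:sqroot} via Schilling's multiplicativity result, and then to verify the characterization of the principal square root quoted from \cite{Higham87}: namely that all eigenvalues of $\mathbf{Y}$ lie in the open right half of the complex plane. First I would record that $\Re\B > 0$ forces every eigenvalue $\lambda$ of $\B$ to satisfy $\Re\lambda > 0$: from $\B\v = \lambda\v$ with $\v \neq 0$ one gets $\Re\lambda = (\v^\dag(\Re\B)\v)/(\v^\dag\v) > 0$. In particular $\B$ is invertible and has no non-positive real eigenvalue, so $-s \notin \sigma(\B)$ for every $s \geq 0$; hence $(s\I+\B)^{-1}$ is defined along the whole ray, the integrand in \eqref{eq:a3} behaves like $s^{-1/2}\,\I$ as $s \to 0+$ and like $s^{-3/2}\,\B$ as $s \to \infty$, and the integral converges absolutely.

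Next I would identify the matrix $\mathbf{Y}$ of \eqref{eq:a3} with $f(\B)$, where $f(z) = z^{1/2}$ is the principal branch of the square root on $\mathbb{C}\setminus\,]-\infty,0]$, and read off its spectrum. Writing $\B = \mathbf{S}\,\mathbf{J}\,\mathbf{S}^{-1}$ in Jordan form, one has $(s\I+\B)^{-1} = \mathbf{S}(s\I+\mathbf{J})^{-1}\mathbf{S}^{-1}$, so $\mathbf{Y} = \mathbf{S}\,\mathbf{W}\,\mathbf{S}^{-1}$ with $\mathbf{W}$ block upper triangular, its diagonal entries being the $\alpha = 1/2$ instance of the scalar integral on the right-hand side of \eqref{eq:a1}, evaluated at the eigenvalues $\lambda_j$ of $\B$. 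The identity \eqref{eq:a1} holds for $x > 0$ and, both sides being holomorphic on the cut plane, extends by analytic continuation to $\Re z > 0$; thus each diagonal entry equals $\lambda_j^{1/2}$ in the principal branch, and the eigenvalues of $\mathbf{Y}$ are precisely $\{\lambda_j^{1/2}\}$. Equivalently one may invoke the spectral mapping theorem for the holomorphic functional calculus, using the Cauchy representation \eqref{eq:Cauchy} together with a Fubini interchange to recognize that \eqref{eq:a3} reproduces $f(\B)$.

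To conclude, write $\lambda_j = |\lambda_j|\,\e^{\ii\varphi_j}$ with $|\varphi_j| < \uppi/2$ since $\Re\lambda_j > 0$; then $\lambda_j^{1/2} = |\lambda_j|^{1/2}\e^{\ii\varphi_j/2}$ has argument of modulus $< \uppi/4$, hence strictly positive real part. So every eigenvalue of $\mathbf{Y}$ lies in the open right half-plane, and combined with $\mathbf{Y}^2 = \B$ this is exactly the characterization of the principal square root. The one step that genuinely requires care is the justification that the matrix integral \eqref{eq:a3} actually computes the holomorphic functional calculus $f(\B)$ — that is, the interchange of the scalar Stieltjes representation of $z^{1/2}$ with the Jordan-block (or Cauchy) representation of the resolvent, together with the analytic continuation carrying \eqref{eq:a1} off the positive axis onto the spectrum of $\B$; the convergence estimates and the localization of $\lambda_j^{1/2}$ are routine.
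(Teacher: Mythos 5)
Your proof is correct, but it reaches the conclusion by a genuinely different route than the paper. Both arguments start from $\mathbf{Y}^2=\B$ (equation~\eqref{eq:sqroot}) and both finish by invoking the characterization from \cite{Higham87} of the principal square root as the unique root whose spectrum lies in the open right half-plane; the difference is in how the spectrum of $\mathbf{Y}$ is located. The paper does it by a purely algebraic identity: the Hermitian real part of the integrand of \eqref{eq:a3} equals $\mathbf{U}^\dag\left[\B^\dag\B+s\,\Re\B\right]\mathbf{U}$ with $\mathbf{U}=(\B+s\I)^{-1}$, which is positive definite for $s>0$ when $\Re\B>0$; integrating gives $\Re\mathbf{Y}>0$, and the numerical-range argument then puts $\sigma(\mathbf{Y})$ in the open right half-plane. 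You instead pass to the Jordan form, observe that the integral acts blockwise and upper-triangularly, and identify the diagonal entries as the principal-branch values $\lambda_j^{1/2}$ via analytic continuation of the scalar identity \eqref{eq:a1}; the spectral mapping then gives $\vert\arg\lambda_j^{1/2}\vert<\uppi/4$. Your flagged worry about fully identifying \eqref{eq:a3} with the holomorphic functional calculus is actually not needed: since upper triangularity of the Jordan-block resolvents is preserved under the integral, reading off the diagonal of $\mathbf{W}$ already determines $\sigma(\mathbf{Y})$, so your argument is complete as it stands. Each approach buys something: the paper's computation is shorter and yields the strictly stronger byproduct $\Re\,\B^{1/2}>0$ (which Theorem~\ref{thm:a1} exploits in the analogous statement for $\Im$), while yours uses only that $\B$ has no eigenvalue in $]-\infty,0]$ — the natural hypothesis under which the principal square root exists — and so proves the lemma under a weaker assumption than $\Re\B>0$.
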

\begin{proof}
Equation~\eqref{eq:sqroot} implies that $\mathbf{Y}$ satisfies equation~\eqref{eq:rooteq}. The real part of the integrand of \eqref{eq:a3} can be expressed in the form 
\begin{multline*}
\frac{1}{2} \left[\B\, \left( \B + s\, I\right)^{-1} +  \left(\B^\dag + s \, \I\right)^{-1} \, \B^\dag\right] =\\
\frac{1}{2} \left(\B^\dag + s \, \I\right)^{-1}\, \left[ \B^\dag \, (\B + s\, \I) + \left(\B^\dag + s \, \I\right) \, \B\right]\,
(\B + s\, \I)^{-1} =  \mathbf{U}^\dag \,\left[\B^\dag\, \B + s\, \Re \B\right]\,\mathbf{U}
\end{multline*}
where $\mathbf{U} := (\B + s \, \I)^{-1}$. Hence for $s > 0$ the real part of the integrand of \eqref{eq:a3} is positive
 and therefore the square root defined by \eqref{eq:a3} satisfies the inequality $\Re \B^{1/2} > 0$.
This result along with equation~\eqref{eq:sqroot} implies that the right-hand side of \eqref{eq:a3}
is the principal square root.  \smartqed
\end{proof} 

\begin{theorem} \label{lem:derexp}
$$\frac{\dd}{\dd s} \e^{s \A} = \A \, \e^{s \A}$$
\end{theorem}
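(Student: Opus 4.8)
The plan is to derive the identity from the Cauchy (Dunford--Riesz) integral representation~\eqref{eq:Cauchy}, which the preceding sections have already adopted as the functional calculus for matrices; I rename the integration variable to $z$ to avoid clashing with the parameter $s$. Since $z \mapsto \e^{s z}$ is entire for every fixed $s$, one fixed contour $\Gamma$ encircling $\mathrm{spec}\,\A$ once in the positive sense serves simultaneously for all $s$, and
\[
\e^{s\A} = \frac{1}{2\uppi\ii}\int_\Gamma (z\,\I - \A)^{-1}\, \e^{s z}\,\dd z .
\]
First I would note that the integrand $(z,s) \mapsto (z\,\I - \A)^{-1}\,\e^{s z}$ is jointly smooth on $\Gamma \times \mathbb{R}$ and that $\Gamma$ is a compact curve, so differentiation under the integral sign is legitimate, giving
\[
\frac{\dd}{\dd s}\,\e^{s\A} = \frac{1}{2\uppi\ii}\int_\Gamma z\,(z\,\I - \A)^{-1}\, \e^{s z}\,\dd z .
\]

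The key algebraic step is the resolvent identity $z\,(z\,\I - \A)^{-1} = \I + \A\,(z\,\I - \A)^{-1}$, obtained by writing $z\,\I = (z\,\I - \A) + \A$. Substituting and splitting the integral yields
\[
\frac{\dd}{\dd s}\,\e^{s\A} = \left(\frac{1}{2\uppi\ii}\int_\Gamma \e^{s z}\,\dd z\right)\I \;+\; \A\,\frac{1}{2\uppi\ii}\int_\Gamma (z\,\I - \A)^{-1}\,\e^{s z}\,\dd z .
\]
The first term vanishes because $z \mapsto \e^{s z}$ is holomorphic inside $\Gamma$, so $\int_\Gamma \e^{s z}\,\dd z = 0$ by Cauchy's theorem; the second term is precisely $\A\,\e^{s\A}$. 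Since $\A$ commutes with $(z\,\I - \A)^{-1}$ for every $z \notin \mathrm{spec}\,\A$, the factor $\A$ may equally be placed on the right, so that $\A\,\e^{s\A} = \e^{s\A}\,\A$, which is the form invoked in the passage from~\eqref{eq:res4a} to~\eqref{eq:res5}.

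There is no deep obstacle here; the only point requiring a line of care is the justification of differentiation under the integral sign, which follows at once from the compactness of $\Gamma$ together with the bound on $\lvert z\,\e^{s z}\rvert$ over $\Gamma$, uniform for $s$ in any compact interval. As an alternative that bypasses the contour entirely, one could differentiate the absolutely and locally uniformly convergent series $\e^{s\A} = \sum_{n\ge 0} s^{n}\A^{n}/n!$ term by term to get $\sum_{n\ge 1} s^{n-1}\A^{n}/(n-1)! = \A\,\e^{s\A}$; I would keep the contour argument only because it stays inside the functional-calculus framework already set up in the paper.
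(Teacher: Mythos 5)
Your proof is correct and follows essentially the same route as the paper: differentiate the Cauchy integral representation under the integral sign, split $z\,\I = (z\,\I - \A) + \A$, and observe that $\int_\Gamma \e^{sz}\,\dd z = 0$. Your version is in fact cleaner, since the paper conflates the integration variable with the parameter $s$, which you avoid by renaming it $z$.
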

\begin{remark}
It follows easily from equation~\eqref{eq:Cauchy} that $\A$ commutes with $\e^{s \A}$.
\end{remark}
\begin{proof}
\begin{multline*}
\frac{\dd}{\dd s} \e^{s \A}  = \frac{1}{2 \uppi \ii}  \frac{\dd}{\dd s} \e^{s \A}  \int_\Gamma \left[ s \, \I - \A\right]^{-1}\, \e^s\, \dd s = 
\frac{1}{2 \uppi \ii}  \,  \int_\Gamma s\, \left[ s \, \I - \A\right]^{-1}\, \e^s \, \dd s = \\
\frac{1}{2 \uppi \ii}  \,  \int_\Gamma (s\, \I - \A + \A) \, \left[ s \, \I - \A\right]^{-1}\, \e^s \, \dd s = 
\frac{1}{2 \uppi \ii}  \,  \int_\Gamma  \e^s \, \dd s + \A \, \e^{s \A}
\end{multline*}
The first term vanishes. \smartqed
\end{proof}

For the sake of convenience we have chosen the definition \eqref{eq:Cauchy} 
of an analytic function of a matrix. We shall show that this definition is equivalent to the more common definition in terms of the Maclaurin series.

We shall begin with proving that $f_n(\A) = \A^n$, $n \in \mathbf{N}$, where
\begin{equation}
f_n(\A) := \frac{1}{2 \uppi \ii} \int_\Gamma s^n \, (s \, \I - \A)^{-1} \, \dd s
\end{equation}
and $\A^n := \A \ldots \A$ ($n$-th power of $\A$). Note first that 
if $g(s) = s \, f(s)$ then $g(\A) = \A \, f(\A)$:
\begin{multline*}
g(\A) = \frac{1}{2 \uppi \ii} \int_\Gamma s\cdot f(s)\, (s\, \I - \A)^{-1} \, \dd s =
\\
\frac{1}{2 \uppi \ii} \int_\Gamma f(s) \, \dd s + \A \frac{1}{2 \uppi \ii} \int_\Gamma
f(s)\, (s\, \I - \A)^{-1} \, \dd s = \A \, f(\A)
\end{multline*}
For $f(s) \equiv 1$ this yields the formula $f_1(\A) = \A$, while from $f_{n+1}(s) = s\, f_n(s)$ we obtain the recursive relation 
$\f_{n+1}(\A) = \A \, f_n(\A)$. By induction, $f_n(\A) = \A^n$ for $n \in \mathbb{N}$. 

Suppose that $f(s) = \sum_{n=1}^\infty a_n \, s^n$. Substituting this series expansion 
in \eqref{eq:Cauchy} and using the result of the previous paragraph, we obtain
the formula
$$f(\A) = \sum_{n=0}^\infty a_n\, \A^n$$
The series converges because it is majorized by $f\left(\| \A \|_d\right)$.

\section{Some properties of matrix-valued CBFs and Stieltjes functions.}
\label{app:CBFs}

\begin{theorem} \label{thm:a1}
If $\B \in \mathcal{M}^\mathbb{C}_d$ and $\Im \B \geq 0$ then the square root defined by \eqref{eq:a3} satisfies the inequality 
$\Im \B^{1/2} \geq 0$. 
\end{theorem}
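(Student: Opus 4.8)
The plan is to carry the positivity through the integral representation \eqref{eq:a3}; in spirit, $z\mapsto z^{1/2}$ is a Pick function, and \eqref{eq:a3} is exactly what makes the matricial statement elementary. Fix $s>0$. Since $\B^{1/2}$ is supposed to be defined by \eqref{eq:a3}, the matrix $\C := s\,\I+\B$ is invertible for every $s>0$, and I would first record the identity
\[
\B\,(s\,\I+\B)^{-1} \;=\; \I - s\,(s\,\I+\B)^{-1},
\]
so that $\Im\bigl[\B\,(s\,\I+\B)^{-1}\bigr] = -\,s\,\Im \C^{-1}$, using $\Im\I = 0$ and $s\in\mathbb{R}$.

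Next I would apply to $\C$ the very computation that precedes Lemma~\ref{lem:inverse}:
\[
\Im \C^{-1} \;=\; \frac{1}{2\ii}\bigl(\C^{-1} - \C^{\dag -1}\bigr) \;=\; \frac{1}{2\ii}\,\C^{-1}\bigl(\C^{\dag} - \C\bigr)\,\C^{\dag -1} \;=\; -\,\mathbf{V}^{\dag}\,(\Im \C)\,\mathbf{V},
\]
where $\mathbf{V} := \C^{\dag -1}$, so that $\mathbf{V}^{\dag} = \C^{-1}$. Since $\Im \C = \Im(s\,\I+\B) = \Im\B \geq 0$ and $\Im \B$ is Hermitian, the right-hand side is a congruence of a negative semi-definite matrix, hence $\Im \C^{-1} \leq 0$. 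Multiplying by $-s<0$ gives $\Im\bigl[\B\,(s\,\I+\B)^{-1}\bigr] = s\,\mathbf{V}^{\dag}(\Im\B)\,\mathbf{V} \geq 0$ for every $s>0$; together with the factor $s^{-1/2}>0$, the whole integrand of \eqref{eq:a3} has positive semi-definite imaginary part for each $s$.

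Finally, I would test \eqref{eq:a3} against an arbitrary $\v\in\mathbb{C}^{d}$. Since the integral in \eqref{eq:a3} converges absolutely (as discussed around Lemma~\ref{lem:sqr}), the sesquilinear form $\v^{\dag}(\cdot)\,\v$ may be taken inside it, and by the previous step the integrand $\v^{\dag}\,\Im\bigl[\B\,(s\,\I+\B)^{-1}\bigr]\,\v\;\,s^{-1/2}$ is nonnegative for every $s>0$; since the prefactor in \eqref{eq:a3} is positive, $\v^{\dag}\bigl(\Im \B^{1/2}\bigr)\,\v \geq 0$. As $\Im \B^{1/2}$ is automatically Hermitian, this is precisely $\Im \B^{1/2}\geq 0$. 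There is essentially no obstacle here: the substance is the inverse-matrix identity already used in Section~\ref{sec:CBF}, and the only step deserving a word is the interchange of $\v^{\dag}(\cdot)\,\v$ with the integral, which is immediate from absolute convergence of \eqref{eq:a3}.
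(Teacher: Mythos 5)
Your proof is correct and follows essentially the same route as the paper: both arguments show that the integrand of \eqref{eq:a3} has positive semi-definite imaginary part for each $s>0$ via the congruence $\mathbf{U}^\dag\,(\Im\B)\,\mathbf{U}$ and then integrate. Your detour through $\B(s\I+\B)^{-1}=\I-s(s\I+\B)^{-1}$ and $\Im\C^{-1}\leq 0$ (the computation preceding Lemma~\ref{lem:inverse}) is only a cosmetic reorganization of the paper's direct factorization of $\Im\bigl[\B(\B+s\I)^{-1}\bigr]$.
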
 
\begin{proof}
\begin{multline*}
\frac{1}{2 \ii} \left[ \B \, (\B + s \I)^{-1} - \left(\B^\dag + s\, \I\right)^{-1}\, \B^\dag\right] = \\
\frac{1}{2 \ii} \left(s\, \I + \B^\dag \right)^{-1}\, \left[\left(s \, \I + \B^\dag\right)\, \B
 - \B^\dag (\B + s\, \I)\right] \, (\B + s\, \I)^{-1} = 
s \, \mathbf{U}^\dag\, \Im \B \, \mathbf{U} \geq 0 
\end{multline*}
where $\mathbf{U} := (s\, \I + \B)^{-1}$. In view of equation~\eqref{eq:a3} this implies that 
$\Im \B^{1/2} \geq 0$. \smartqed
\end{proof}

In view of Definition~\ref{def:CBF} this entails an important corollary:
\begin{corollary}\label{cor:half}
If $\A(x)$ is a matrix-valued CBF then the principal square root $\A(x)^{1/2}$ is a CBF.
\end{corollary}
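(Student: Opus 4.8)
The plan is to verify, for $\B(x):=\A(x)^{1/2}$, the three requirements of Definition~\ref{def:CBF}: an analytic continuation to $\mathbb{C}\setminus]-\infty,0]$, the inequality $\Im z\,\Im\B(z)\ge 0$ there, and a real limit at $0+$. Throughout I use that $\A$ is complex-symmetric-matrix-valued — equivalently, that the coefficients in its integral representation (Theorem~\ref{thm:CBF}) are symmetric, which holds in every application in this paper — so that once the principal square root is known to exist it is given by formula~\eqref{eq:a3} and is again complex symmetric.

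The crucial step is to show that for $z\in\mathbb{C}\setminus]-\infty,0]$ the spectrum of $\A(z)$ avoids $]-\infty,0]$, apart from the degenerate case in which $\A$ has a non-trivial constant kernel (a subspace on which $\A(\cdot)\equiv 0$); I exclude that case, as it does not arise in any application here — for instance $\Q_\n$ is invertible on the cut plane. I reduce the claim to the scalar case: fix $z$, let $\v$ be a unit eigenvector of $\A(z)$ with $\A(z)\,\v=\lambda\,\v$, and put $h_\v(w):=\v^\dag\,\A(w)\,\v$. Then $\Im w\,\Im h_\v(w)\ge 0$ (from $\Im w\,\Im\A(w)\ge 0$), $h_\v(x)$ is real for $x>0$ ($\A(x)$ being real symmetric), and $h_\v(0+)$ exists and is real; hence $h_\v$ is a scalar CBF, and unless $h_\v\equiv 0$ — the excluded case — a non-zero scalar CBF takes no value in $]-\infty,0]$ on $\mathbb{C}\setminus]-\infty,0]$. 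Therefore $\lambda=h_\v(z)\notin]-\infty,0]$, so $\A(z)$ has no non-positive real eigenvalue. Consequently $\A(z)^{1/2}$ is well defined by~\eqref{eq:a3} and equals the principal square root (its eigenvalues, being the principal square roots of those of $\A(z)$, lie in the open right half-plane); and since the principal square root is analytic in a matrix whose spectrum misses $]-\infty,0]$ — being represented by the Cauchy integral~\eqref{eq:Cauchy} with $f(s)=s^{1/2}$ and a locally fixed contour — the map $z\mapsto\A(z)^{1/2}$ is analytic on $\mathbb{C}\setminus]-\infty,0]$.

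For the sign condition: if $\Im z>0$ then $\Im\A(z)\ge 0$, so Theorem~\ref{thm:a1} yields $\Im\A(z)^{1/2}\ge 0$ and hence $\Im z\,\Im\A(z)^{1/2}\ge 0$. If $\Im z<0$, the computation in the proof of Theorem~\ref{thm:a1} — in which the imaginary part of the integrand of~\eqref{eq:a3} appears, with $\mathbf{U}:=(s\,\I+\A(z))^{-1}$, as a positive multiple of $\mathbf{U}^\dag\,\Im\A(z)\,\mathbf{U}$ — shows that $\Im\A(z)^{1/2}$ inherits the sign of $\Im\A(z)$, so $\Im\A(z)^{1/2}\le 0$ and again the product is $\ge 0$; equivalently, one may use the Schwarz reflection $\A(\bar z)=\overline{\A(z)}$ (valid entrywise since $\A$ is real on $\mathbb{R}_+$) together with the symmetry of $\A(z)$ and $\A(z)^{1/2}$. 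Finally $\A(x)\to\A_0:=\lim_{x\to 0+}\A(x)$, which is real and, as a limit of the real symmetric positive semidefinite matrices $\A(x)$ (their eigenvalues are values at positive reals of scalar CBFs, hence $\ge 0$), is itself real symmetric positive semidefinite; the square root being continuous on positive semidefinite matrices, $\A(x)^{1/2}\to\A_0^{1/2}$, which is real. All three conditions of Definition~\ref{def:CBF} are satisfied, so $\A(\cdot)^{1/2}$ is a CBF.

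The step I expect to be the real obstacle is the spectral localization of $\A(z)$: the inequality $\Im\A(z)\ge 0$ at a single matrix value does not preclude negative real eigenvalues (e.g. $\mathrm{diag}(-1,\,1+\ii)$ has positive semidefinite imaginary part), so the global CBF structure must genuinely be used, through the reduction to scalar CBFs; and the degenerate constant-kernel case must be isolated — there the principal square root need not exist, and the statement should be read on the orthogonal complement of that kernel, on which $\A$ restricts to an invertible CBF. The remaining verifications are short once Theorem~\ref{thm:a1} and the elementary properties of scalar CBFs are in hand.
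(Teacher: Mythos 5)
Your proof is correct and follows the paper's route: the paper obtains the corollary directly from Theorem~\ref{thm:a1} applied to the integral representation \eqref{eq:a3}, which is exactly your central step. The extra work you supply — locating the spectrum of $\A(z)$ off $]-\infty,0]$ by reduction to scalar CBFs so that the principal square root exists and is analytic on the whole cut plane, treating $\Im z<0$, and checking the limit at $0+$ — consists of details the paper leaves implicit rather than a different argument.
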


\begin{theorem} \label{thm:CBF}
If the matrix-valued function $\A(x)$ on $\mathbb{R}_+$ is a CBF then there are two positive semi-definite 
matrices $\B$ and $\C$, a positive Radon measure $\mu$ on $\mathbb{R}_+$ satisfying the
inequality~\eqref{eq:6} and a
measurable $\mu$-almost everywhere bounded and positive semi-definite function $\mathbf{M}$
such that 
\begin{equation} \label{eq:5}
\A(x) = \B + x\, \C + x \int_{]0,\infty[} (x + s)^{-1} \mathbf{M}(s)\, \mu(\dd s), \qquad x \geq 0
\end{equation}
\end{theorem}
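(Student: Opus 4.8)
The plan is to reduce the matrix-valued statement to the known scalar integral representation of CBFs (Theorem~6.2 in \cite{BernsteinFunctions}) by testing $\A(x)$ against vectors, and then to reconstruct the matrix-valued data $\B$, $\C$, $\M(s)$, $\mu$ by polarization and a dominated-convergence/Radon--Nikod\'ym argument. First I would fix an arbitrary $\v \in \mathbb{C}^d$ and consider the scalar function $g_\v(x) := \v^\dag\, \A(x)\, \v$. Since $\A$ is a matrix-valued CBF, it has an analytic continuation to $\mathbb{C}\setminus\,]-\infty,0]$ with $\Im z\,\Im\A(z)\geq 0$, so $\Im z\, \Im g_\v(z) = \Im z\, \v^\dag\,\Im\A(z)\,\v \geq 0$; together with the existence of the real limit $\lim_{x\to 0+}g_\v(x)$ this shows $g_\v$ is a scalar CBF. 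Hence there exist $a_\v, b_\v \geq 0$ and a positive Radon measure $\rho_\v$ on $]0,\infty[$ with $\int (1+s)^{-1}\rho_\v(\dd s) < \infty$ such that
\begin{equation*}
\v^\dag\,\A(x)\,\v = a_\v + b_\v\, x + x\int_{]0,\infty[} (x+s)^{-1}\,\rho_\v(\dd s).
\end{equation*}

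The second step is to identify $\B$ and $\C$. From the scalar representation, $a_\v = \lim_{x\to 0+} g_\v(x) = \v^\dag\,\A(0^+)\,\v$, so setting $\B := \lim_{x\to 0+}\A(x)$ (which exists and is real by Definition~\ref{def:CBF}) gives $a_\v = \v^\dag\,\B\,\v$; since $a_\v \geq 0$ for all $\v$, $\B$ is positive semi-definite. Likewise $b_\v = \lim_{x\to\infty} g_\v(x)/x$; defining $\C := \lim_{x\to\infty}\A(x)/x$ (the limit exists entrywise, e.g.\ by the same scalar theory applied to diagonal and off-diagonal combinations via polarization) yields $b_\v = \v^\dag\,\C\,\v \geq 0$, so $\C \succeq 0$. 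Subtracting, the function $\R(x) := \A(x) - \B - x\,\C$ satisfies $\v^\dag\,\R(x)\,\v = x\int (x+s)^{-1}\rho_\v(\dd s)$ for every $\v$, and is again a matrix-valued CBF vanishing at $0$ with zero linear growth; so without loss of generality I may assume $\B = \C = 0$ from here and must produce $\mu$ and $\M$.

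The third and main step is to disentangle the \emph{single} matrix-valued measure $\M(s)\,\mu(\dd s)$ from the family $\{\rho_\v\}_{\v}$. By polarization, for $\v, \w \in \mathbb{C}^d$ the sesquilinear form $\w^\dag\,\R(x)\,\v$ is a complex linear combination of four expressions $\u^\dag\,\R(x)\,\u$, hence equals $x\int (x+s)^{-1}\,\sigma_{\w,\v}(\dd s)$ for a corresponding complex Radon measure $\sigma_{\w,\v}$, with $\sigma_{\v,\v} = \rho_\v \geq 0$. Uniqueness in the scalar Stieltjes/CBF representation (injectivity of $\rho\mapsto x\int(x+s)^{-1}\rho(\dd s)$, which follows from Stieltjes inversion) makes $\sigma_{\w,\v}$ sesquilinear in $(\w,\v)$ and Hermitian-positive as a matrix of measures. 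To get a \emph{common} dominating measure I take $\mu := \sum_{i=1}^d \sigma_{\ee_i,\ee_i} = \sum_i \rho_{\ee_i}$ (trace measure), which is a positive Radon measure satisfying \eqref{eq:6} because each $\rho_{\ee_i}$ does; then each $\sigma_{\ee_i,\ee_j}$ is absolutely continuous with respect to $\mu$ (being dominated in total variation by $\mu$ via the Cauchy--Schwarz inequality $|\sigma_{\ee_i,\ee_j}| \leq \tfrac12(\rho_{\ee_i}+\rho_{\ee_j}) \leq \mu$), so by Radon--Nikod\'ym there is a $\mu$-measurable matrix-valued density $\M(s)$ with $(\M(s))_{ij} = \dd\sigma_{\ee_i,\ee_j}/\dd\mu$. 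Positivity of the matrix of measures $(\sigma_{\ee_i,\ee_j})$ forces $\M(s) \succeq 0$ for $\mu$-a.e.\ $s$, and the normalization $\sum_i (\M(s))_{ii} = 1$ $\mu$-a.e.\ (since $\mu$ is exactly the trace of the density) gives $\|\M(s)\| \leq \trace\M(s) = 1$, so $\M$ is $\mu$-a.e.\ bounded by $1$ and positive semi-definite. Reassembling, $\w^\dag\,\R(x)\,\v = x\int (x+s)^{-1}\,\w^\dag\,\M(s)\,\v\,\mu(\dd s)$ for all $\w,\v$, which is exactly \eqref{eq:5} with $\B,\C$ restored.

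The step I expect to be the main obstacle is the disentanglement in the third paragraph: namely ensuring that the four-fold polarization of CBFs stays inside the CBF/Stieltjes cone in a way that yields genuinely positive-semidefinite \emph{matrix-valued} densities, and justifying the single common dominating measure with the total-variation bound $|\sigma_{\ee_i,\ee_j}| \leq \mu$. Everything else (the scalar CBF representation, the identification of $\B,\C$ as limits, dominated convergence for the normalization, measurability of the Radon--Nikod\'ym density) is routine once that positivity bookkeeping is in place.
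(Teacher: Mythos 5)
Your proposal is correct and follows essentially the same route as the paper's proof: reduce to scalar CBFs via quadratic forms, identify $\B$ and $\C$ as limits, recover a matrix-valued measure by polarization and Stieltjes inversion, dominate it by its trace measure via Cauchy--Schwarz, and apply Radon--Nikod\'ym. The only cosmetic difference is that you work with sesquilinear forms over $\mathbb{C}^d$ while the paper notes $\A(x)$ is real for real $x$ and restricts to real vectors.
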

\begin{proof}
If $\B \in \mathcal{M}^\mathbb{C}_d$ and $\Im \B \geq 0$ then for every $\v \in
\mathbb{C}^d$ 
$$\Im \v^\dag\, \B \, \v = \frac{1}{2 \ii} \left[ \v^\dag \, \B \, \v - 
\overline{\v^\dag \, \B \, \v}\right] = \frac{1}{2 \ii} \v^\dag\, \left(\B - \B^\dag\right)\, 
\v \geq 0$$

If $\A$ is a CBF then 
$\Im \left[\v^\dag\, \A(z)\, \v\right] \geq 0$ for every $\v \in \mathbb{C}^d$ and $\Im z > 0$.
Furthermore $\lim_{x\rightarrow 0+} \A(x)$ exists and is real. 
For real $x$ the matrix $\A(x)$ is real, hence it is sufficient to consider $\v \in \mathbb{R}^d$.
It follows that  $\v^\top\, \A(x)\, \v$ is a CBF for every $\v \in \mathbb{R}$ and 
therefore there are two non-negative numbers $b_\v, c_\v$ and a positive measure $\mu_\v$ satisfying 
inequality~\eqref{eq:6} such that 
$$ \v^\top\, \A(x)\, \v = b_\v + x \, c_\v + x \int_{]0,\infty[} (x + r)^{-1} \, \mu_\v(\dd r)$$
\cite{BernsteinFunctions}. It is clear that $b_\v = \v^\top \, \B\, \v$, where $\B := \A(0)$.
As for $c_\v$,
$$c_\v = \lim_{x\rightarrow\infty} \left[ x^{-1}\v^\top\, \A(x)\, \v\right]$$
This proves that the limit on the right exists, hence a symmetric matrix $\C$ can be defined by polarization
$$2 \v^\top \, \C \, \w = c_{\v + \w}- c_\v - c_\w $$
It is easy to see that the right-hand side is a linear function of $\v$ and $\w$ and therefore it defines a symmetric positive semi-definite matrix $\C$. 

Let $\A_0(x) := \A(x) - \B - x \, \C$,
$$\v^\top\, \A_0(x)\, \v = x \int_{]0,\infty[} (x + s)^{-1}\, \mu_\v(\dd s)$$
By polarization 
$$\v^\top\, \A_0(x)\, \w = \int_{]0,\infty[} (x + s)^{-1}\, N_{\v,\w}(\dd r)$$
where $2 N_{\v,\w}(I) := \mu_{\v+\w}(I) - \mu_\v(I) - \mu_\w(I)$ for every interval 
$I \subset \mathbb{R}_+$. $N_{\v,\w}(I)$ can be expressed in terms of
$\v^\top\,\A_0\, \w$:
$$N_{\v,\w}(I) = \lim_{\varepsilon\rightarrow 0+} \frac{1}{\uppi} \int_I \Im 
\frac{\v^\top\, \A_0(-s + \ii \varepsilon) \, \w}{s - \ii \varepsilon} \, \dd r$$
for every segment $I$ whose ends are continuity points of the measure $N_{\v,\w}$
\cite{BernsteinFunctions}. 
The last expression shows existence of a $\mathcal{M}_d$-valued measure $\N$ such that 
$N_{\v,\w}(I) = \v^\top\, \N(I)\, \w$. $\N(I)$ is positive semi-definite because
for every $\v \in \mathbb{R}_+$ $\v^\top\, \N(I)\, \v \geq 0$.

Now 
$$\v^\top\, \N(I) \, \v + \w^\top \, \N(I)\, \w \pm 2 \v^\top\, \N(I)\, \w
= (\v \pm \w)^\top\, \N(I)\, (\v \pm \w) \geq 0$$
hence
$$\vert \v^\top\, \N(I) \, \w  \vert \leq \frac{1}{2} \left[ \v^\top\, \N(I)\, \v
+ \w^\top \, \N(I) \, \w\right] \leq \frac{1}{2} \left(\| \v \|^2 +  \, \| \w\|^2\right) \, \mu(I)$$
where $\mu(I) := \mathrm{trace}[\N(I)]$. The Radon measure $\mu$ thus defined on
$\mathbb{R}_+$ is positive. The measures $\mu_\v$ satisfy the inequality~\eqref{eq:6},
hence $\N_\v$ and $\mu$ satisfy the same inequality.

By the Radon-Nikodym theorem \cite{Rudin76} there is a $\mu$-almost everywhere bounded function
$\M(r)$ on $\mathbb{R}_+$ such that $\N(\dd r) = \M(r)\, \mu(\dd r)$. \smartqed
\end{proof}

\begin{corollary} \label{cor:calc}
If $\A(x)$ is a matrix-valued CBF satisfying~\eqref{eq:5}
$$\N(]a,b]) = \frac{1}{\uppi} \lim_{\varepsilon\rightarrow 0+}\int_{]a,b]} 
\Im \left[\frac{\A_0(-s + \ii \varepsilon)}{s - \ii \varepsilon}\right] \dd s$$
and $\mu(]a,b]) = \mathrm{trace}[\N(]a,b])]$
for every regular point $b > 0$ of the measure $\N$ and every $a > 0$,
where $\A_0(x) = \A(x) - \B - x \, \C$. 
\end{corollary}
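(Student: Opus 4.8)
The plan is to deduce the matrix-valued inversion formula directly from the scalar Stieltjes--Perron inversion that was already invoked in the proof of Theorem~\ref{thm:CBF}, by testing the representation \eqref{eq:5} against a basis of $\mathbb{R}^d$. Recall that in that proof one sets $\A_0(x) := \A(x) - \B - x\,\C$, so that \eqref{eq:5} gives $\v^\top\,\A_0(x)\,\w = \int_{]0,\infty[}(x+s)^{-1}\,N_{\v,\w}(\dd s)$ for real $\v,\w$, with $N_{\v,\w}(I) = \v^\top\,\N(I)\,\w$, $\N(\dd r) = \M(r)\,\mu(\dd r)$ and $\mu = \trace\,\N$. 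For each fixed $\v,\w$ the function $z \mapsto \v^\top\,\A_0(z)\,\w$ is a scalar Stieltjes-type function, and the classical inversion formula quoted from \cite{BernsteinFunctions} in that proof recovers $N_{\v,\w}$ on any interval whose endpoints are continuity points of $N_{\v,\w}$; it is this scalar statement that I would lift to matrices.

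First I would specialize to $\v = \ee_i$, $\w = \ee_j$, $i,j = 1,\dots,d$, where $\ee_1,\dots,\ee_d$ is the standard basis. The scalar inversion formula then reads, on the interval $]a,b]$,
\[
\ee_i^\top\,\N(]a,b])\,\ee_j = \frac{1}{\uppi}\,\lim_{\varepsilon\rightarrow 0+}\int_{]a,b]}\Im\!\left[\frac{\ee_i^\top\,\A_0(-s+\ii\varepsilon)\,\ee_j}{s-\ii\varepsilon}\right]\dd s,
\]
which makes sense because $-s+\ii\varepsilon \notin\,]-\infty,0]$ for $s,\varepsilon>0$ so $\A_0$ is analytic there. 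Since $\Im[\cdot]$, the integral over $]a,b]$, and the limit $\varepsilon\rightarrow0+$ all act entrywise, the right-hand side is exactly the $(i,j)$ entry of $\frac{1}{\uppi}\lim_{\varepsilon\rightarrow 0+}\int_{]a,b]}\Im[\A_0(-s+\ii\varepsilon)/(s-\ii\varepsilon)]\,\dd s$; collecting the $d^2$ scalar identities yields the asserted formula for $\N(]a,b])$. The second claim $\mu(]a,b]) = \trace[\N(]a,b])]$ then needs nothing new, being the definition of $\mu$ from the proof of Theorem~\ref{thm:CBF} read on the interval $]a,b]$.

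The hard part is really just the bookkeeping around ``regular points'' and the endpoints, not any new analysis. Since $\N(I)$ is positive semi-definite with $\trace\,\N(I) = \mu(I)$, and a positive semi-definite matrix of zero trace vanishes, a point $r>0$ is an atom of $\N$ if and only if it is an atom of $\mu$, equivalently an atom of every $N_{\ee_i,\ee_j}$; hence ``regular point of $\N$'' is unambiguous and the scalar inversion applies simultaneously in all $d^2$ entries there. The atoms form an at most countable set, so the regular points are dense and $\N$ is determined by its values on intervals with regular endpoints; the left endpoint $a$ should accordingly be taken among the regular points (or approached through them), the asymmetry in the statement being harmless. The one genuinely analytic ingredient — the interchange of $\lim_{\varepsilon\rightarrow 0+}$ with $\int_{]a,b]}$ in the scalar inversion formula — is precisely what was imported from \cite{BernsteinFunctions} in the proof of Theorem~\ref{thm:CBF}, so beyond the entrywise reassembly above there is nothing further to prove.
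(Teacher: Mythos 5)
Your argument is correct and is essentially the paper's: the displayed inversion formula is already established for $\v^\top\,\A_0\,\w$ in the proof of Theorem~\ref{thm:CBF} (for intervals whose endpoints are continuity points of $N_{\v,\w}$), and the corollary is just its entrywise reassembly via the standard basis together with the definition $\mu = \trace[\N]$. One small imprecision: an atom of $\mu$ need not be an atom of \emph{every} entry $N_{\ee_i,\ee_j}$ (an off-diagonal entry can vanish there), but the implication you actually need --- that $\N(\{b\})=0$ forces $N_{\ee_i,\ee_j}(\{b\})=0$ for all $i,j$ --- is exactly what your positive-semidefiniteness and trace observation delivers, so the argument stands.
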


Equation~\eqref{eq:5} provides an analytic continuation of the function $\A(x)$ to the 
complex plane cut along the negative real semi-axis. 

The function $\M(s)$ can be assumed bounded by 1 almost everywhere in the sense of measure $\mu$.

\begin{corollary} \label{cor:coeffs}
If $\A(x)$ satisfies equation~\eqref{eq:5} and \eqref{eq:6}, then
\begin{enumerate}[(i)]
\item $\A(0) = \B$;
\item $\C = \lim_{x\rightarrow \infty} x^{-1}\, \A(x) \quad \text{for $x \in \mathbb{R}_+$}$.
\end{enumerate}
\end{corollary}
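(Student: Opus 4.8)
The plan is to extract both identities directly from the integral representation \eqref{eq:5}, the only analytic ingredient being the Lebesgue Dominated Convergence Theorem; the growth condition \eqref{eq:6} will supply the dominating function, since $\mu$ is in general not a finite measure and the integrands cannot be bounded by a constant.

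For part (i) I would put $x = 0$ in \eqref{eq:5}. Then $x\,\C = 0$ and the integrand $x\,(x+s)^{-1}\,\M(s)$ vanishes for every $s > 0$, so the integral term is $0$ irrespective of whether $\int_{]0,\infty[} s^{-1}\,\M(s)\,\mu(\dd s)$ converges; this leaves $\A(0) = \B$. If one also wants the continuity statement $\lim_{x\rightarrow 0+}\A(x) = \B$, I would use the elementary bound $x\,(x+s)^{-1} \le 2/(1+s)$ for $0 < x \le 1$ and all $s > 0$, obtained by treating $s \ge 1$ and $s < 1$ separately; together with $\|\M(s)\| \le 1$ $\mu$-almost everywhere and \eqref{eq:6} this gives a $\mu$-integrable majorant, and the Dominated Convergence Theorem sends the integral to $0$.

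For part (ii) I would divide \eqref{eq:5} by $x > 0$, obtaining
$$x^{-1}\,\A(x) = x^{-1}\,\B + \C + \int_{]0,\infty[} (x+s)^{-1}\,\M(s)\,\mu(\dd s).$$
The term $x^{-1}\,\B$ tends to $0$ in norm as $x \rightarrow \infty$. For the integral I would use the inequality $(x+s)^{-1} \le (1+s)^{-1}$, valid for all $x \ge 1$ and $s > 0$, so that $\|(x+s)^{-1}\,\M(s)\| \le (1+s)^{-1}$ is a $\mu$-integrable majorant by \eqref{eq:6}; since $(x+s)^{-1}\rightarrow 0$ pointwise, the Dominated Convergence Theorem forces the integral to $0$, and therefore $\lim_{x\rightarrow\infty} x^{-1}\,\A(x) = \C$.

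There is no genuine obstacle here: the whole content is the choice of the explicit majorants $2/(1+s)$ and $(1+s)^{-1}$ on the respective ranges of $x$, which is forced on us because \eqref{eq:6} only controls $\mu$ against $(1+s)^{-1}$. Alternatively, both statements can be deduced from the scalar facts already established inside the proof of Theorem~\ref{thm:CBF} — namely $\B = \A(0)$ and $\v^\top\,\C\,\v = \lim_{x\rightarrow\infty} x^{-1}\,\v^\top\,\A(x)\,\v$ for each $\v$ — by polarization, recovering the matrix limit from the convergence of all its quadratic forms.
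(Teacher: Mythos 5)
Your proof is correct and follows essentially the same route as the paper: part (i) by direct substitution of $x=0$ into equation~\eqref{eq:5}, and part (ii) by dividing by $x$, majorizing the integrand by $(1+s)^{-1}$ for $x \geq 1$ using $\|\M(s)\|\leq 1$ and \eqref{eq:6}, and applying the Lebesgue Dominated Convergence Theorem. The extra remarks on continuity at $0+$ and the polarization alternative are sound but not needed for the statement as given.
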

\begin{proof}
(i) follows from equation~\eqref{eq:5};

Concerning (ii), it is sufficient to prove that
$\mathbf{L}(x) := \int_{]0,\infty[} (x + s)^{-1} \M_\n(s) \, \mu(\dd s) \rightarrow 0$
for $x \rightarrow \infty$.
Indeed, for $ x \geq 1$,
$$\vert \mathbf{L}(x) \vert \leq \int_{]0,\infty[} (x + s)^{-1}\, \mu(\dd s) \leq 
\int_{]0,\infty[} (1 + s)^{-1}\, \mu(\dd s) < \infty$$
on account of \eqref{eq:6}. The Lebesgue Dominated Convergence Theorem implies that 
$\lim_{x\rightarrow\infty} \mathbf{L}(x) = 0$, q.e.d. \smartqed
\end{proof} 

For the analytic continuation $\A(z)$, $z \in \mathbb{C}\setminus ]-\infty,0[$, the corollary can be
easily extended
to the limit $\Re z \rightarrow \infty$. 

\begin{theorem}\label{thm:S}
If the matrix-valued function $\A(x)$ is a Stieltjes function then there are two positive 
semi-definite 
matrices $\B$ and $\C$, a positive Radon measure $\mu$ on $\mathbb{R}_+$ satisfying the
inequality~\eqref{eq:6} and a
measurable $\mu$-almost everywhere bounded and positive semi-definite function $\mathbf{M}$
such that 
\begin{equation} \label{eq:S}
\A(x) = \B + x^{-1}\, \C + \int_{]0,\infty[} (x + s)^{-1} \mathbf{M}(s)\, \mu(\dd s)
\end{equation}
\end{theorem}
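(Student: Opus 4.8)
The plan is to run the argument of Theorem~\ref{thm:CBF} again, with the scalar complete Bernstein representation replaced by the scalar Stieltjes representation: a scalar Stieltjes function $f$ on $\mathbb{R}_+$ has the form $f(x) = a/x + b + \int_{]0,\infty[}(x + s)^{-1}\,\sigma(\dd s)$ with $a, b \geq 0$ and $\sigma$ a positive measure satisfying~\eqref{eq:6} (see \cite{BernsteinFunctions}).

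First I would reduce to quadratic forms. Fix $\v \in \mathbb{R}^d$ and set $f_\v(x) := \v^\top\,\A(x)\,\v$. From the holomorphic continuation of $\A$ to $\mathbb{C}\setminus\,]-\infty,0]$ one gets a holomorphic continuation of $f_\v$, and since $\v$ is real, $\Im[\v^\top\,\A(z)\,\v] = \v^\top\,(\Im\A(z))\,\v$; hence \eqref{eq:1} gives $\Im z\,\Im f_\v(z) \leq 0$, while Definition~\ref{def:Stieltjes} gives $\lim_{x\to 0+}f_\v(x) = \v^\top\,\A(0)\,\v \geq 0$. So $f_\v$ is a scalar Stieltjes function and $f_\v(x) = a_\v/x + b_\v + \int_{]0,\infty[}(x + s)^{-1}\,\sigma_\v(\dd s)$ with $a_\v, b_\v \geq 0$ and $\sigma_\v$ as above.

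Next I would build the matrix coefficients by polarization, precisely as in Theorem~\ref{thm:CBF}. Since $b_\v = \lim_{x\to\infty} f_\v(x)$ exists for each $\v$, the matrix $\B := \lim_{x\to\infty}\A(x)$ exists, and $b_\v = \v^\top\,\B\,\v \geq 0$ makes it positive semi-definite. Since $a_\v = \lim_{x\to 0+} x\,f_\v(x)$ exists for each $\v$, the identity $2\,\v^\top\,\C\,\w := a_{\v+\w} - a_\v - a_\w$ defines a real symmetric matrix $\C$ with $a_\v = \v^\top\,\C\,\v \geq 0$, so $\C \geq 0$. Set $\A_0(x) := \A(x) - \B - x^{-1}\C$; then $\v^\top\,\A_0(x)\,\v = \int_{]0,\infty[}(x + s)^{-1}\,\sigma_\v(\dd s)$, and $2\,N_{\v,\w}(I) := \sigma_{\v+\w}(I) - \sigma_\v(I) - \sigma_\w(I)$ gives $\v^\top\,\A_0(x)\,\w = \int_{]0,\infty[}(x + s)^{-1}\,N_{\v,\w}(\dd s)$. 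By the Stieltjes inversion formula, $N_{\v,\w}$ is recovered from the boundary values of $\v^\top\,\A_0(z)\,\w$ on $]-\infty,0[$, exactly as the measure $\N$ is recovered in the proof of Theorem~\ref{thm:CBF}; this produces an $\mathcal{M}_d$-valued measure $\N$ with $N_{\v,\w}(I) = \v^\top\,\N(I)\,\w$, and $\v^\top\,\N(I)\,\v = \sigma_\v(I) \geq 0$ shows $\N(I)$ is positive semi-definite. The bound $|\v^\top\,\N(I)\,\w| \leq \tfrac12(\|\v\|^2 + \|\w\|^2)\,\mu(I)$ with $\mu(I) := \trace[\N(I)]$ then exhibits $\mu$ as a positive Radon measure, which satisfies~\eqref{eq:6} because $\mu = \sum_i \sigma_{\ee_i}$ and each $\sigma_{\ee_i}$ does; and the Radon--Nikodym theorem yields a $\mu$-a.e.\ bounded $\M$ with $\N(\dd s) = \M(s)\,\mu(\dd s)$, normalizable to $\|\M\| \leq 1$ $\mu$-a.e., with $\M(s) \geq 0$ for $\mu$-a.e.\ $s$ since $\N \geq 0$. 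This is~\eqref{eq:S}.

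I do not expect a real obstacle: every step is either the scalar Stieltjes representation applied to $f_\v$ or a verbatim reuse of the measure-assembly machinery of Theorem~\ref{thm:CBF}. The only genuinely new point to check is the scalar reduction --- that \eqref{eq:1} descends to each quadratic form $f_\v$ --- which is immediate from $\Im[\v^\top\,\A(z)\,\v] = \v^\top\,(\Im\A(z))\,\v$. The one technical nuisance, shared with Theorem~\ref{thm:CBF}, is showing that the family $\{N_{\v,\w}\}$ glues to a single matrix-valued measure $\N$, which rests on Stieltjes inversion together with the domination of the off-diagonal forms by the trace measure $\mu$.
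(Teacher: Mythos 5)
Your proposal is correct and follows exactly the route the paper intends: the paper itself only remarks that the proof is "analogous to the proof of Theorem~\ref{thm:CBF}, based on the integral representation $f(x) = a + b/x + \int_{]0,\infty[}(x+s)^{-1}\mu(\dd s)$," and your sketch carries out precisely that reduction to scalar quadratic forms followed by the same polarization, Stieltjes-inversion, and Radon--Nikodym machinery. No discrepancy to report.
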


The proof of this theorem, based on the integral representation 
$$f(x) = a + b/x + \int_{]0,\infty[} (x + s)^{-1} \, \mu(\dd s)$$
\cite{BernsteinFunctions}
with $a, b \geq 0$ and a positive Radon measure $\mu$ satisfying equation~\eqref{eq:6},
is analogous to the proof of Theorem~\ref{thm:CBF}. 

Equation~\eqref{eq:S} provides an analytic continuation of the function $\A(x)$ to the complex plane cut along the negative real semi axis. 

Comparison of Theorems~\ref{thm:CBF} and \ref{thm:S} yields the following corollary
\begin{corollary} \label{cor:toS}
If $\A$ is a matrix-valued CBF then $x^{-1}\, \A(x)$ is a matrix-valued Stieltjes
function.
\end{corollary}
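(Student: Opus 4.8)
The plan is to read off the Stieltjes representation of $x^{-1}\A(x)$ from the integral representation of $\A$ supplied by Theorem~\ref{thm:CBF}. Since $\A$ is a matrix-valued CBF, that theorem gives positive semi-definite matrices $\B,\C$, a positive Radon measure $\mu$ on $\mathbb{R}_+$ satisfying~\eqref{eq:6}, and a $\mu$-almost everywhere bounded positive semi-definite function $\M$ with $\A(x) = \B + x\,\C + x\int_{]0,\infty[}(x+s)^{-1}\M(s)\,\mu(\dd s)$ for $x>0$. Dividing by $x$ gives
\[
x^{-1}\,\A(x) \;=\; \C \;+\; x^{-1}\,\B \;+\; \int_{]0,\infty[}(x+s)^{-1}\,\M(s)\,\mu(\dd s),
\]
which is exactly the form~\eqref{eq:S} of a Stieltjes representation, with the two positive semi-definite matrices interchanged and with the same measure $\mu$ and the same function $\M$. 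Hence the real content of the corollary is that a matrix function written in the form~\eqref{eq:S} is genuinely a Stieltjes function in the sense of Definition~\ref{def:Stieltjes}: this is the converse half of Theorem~\ref{thm:S}, which I would either quote from the scalar theory or verify directly along the following lines.

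First, $z\mapsto z^{-1}\A(z)$ continues analytically to $\mathbb{C}\setminus ]-\infty,0]$, since $\A$ does by hypothesis and $z\mapsto z^{-1}$ is analytic there (equivalently, the right-hand side above continues analytically term by term). For the sign condition I would fix $\v\in\mathbb{C}^d$ and sandwich:
\[
\v^\dag\,(z^{-1}\A(z))\,\v \;=\; z^{-1}\,(\v^\dag\B\v) \;+\; \v^\dag\C\v \;+\; \int_{]0,\infty[}(z+s)^{-1}\,(\v^\dag\M(s)\v)\,\mu(\dd s),
\]
where $\v^\dag\B\v,\ \v^\dag\C\v,\ \v^\dag\M(s)\v$ are non-negative real numbers. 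Since $\Im z\cdot\Im(z^{-1}) = -(\Im z)^2/|z|^2\le 0$ and $\Im z\cdot\Im((z+s)^{-1}) = -(\Im z)^2/|z+s|^2\le 0$ for every $s>0$, this yields $\Im z\cdot\Im[\v^\dag(z^{-1}\A(z))\v]\le 0$ for all $\v$; and because $\Im(z^{-1}\A(z))$ is Hermitian with $\v^\dag\Im(z^{-1}\A(z))\v = \Im[\v^\dag(z^{-1}\A(z))\v]$ (the elementary identity recorded at the start of the proof of Theorem~\ref{thm:CBF}), this is precisely $\Im z\,\Im(z^{-1}\A(z))\le 0$.

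It then remains to identify $\lim_{x\to 0+}x^{-1}\A(x)$. The term $\C$ is constant, real and positive semi-definite; the integral increases, as $x\to 0+$, to the real positive semi-definite matrix $\int_{]0,\infty[}s^{-1}\M(s)\,\mu(\dd s)$ (possibly infinite in some directions, by monotone convergence); and $x^{-1}\B$ is real positive semi-definite for each $x>0$. Hence the limit exists and is real positive semi-definite, understood in the extended sense permitted by the convention $\mathbb{R}_+ = ]0,\infty]$, which is what Definition~\ref{def:Stieltjes} asks for.

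I do not expect a genuine obstacle; the corollary is essentially bookkeeping, as its stated derivation (``comparison of Theorems~\ref{thm:CBF} and \ref{thm:S}'') already suggests. The one point deserving a word of care is the behaviour at the origin: when $\B = \A(0)\neq 0$ the Stieltjes function $x^{-1}\A(x)$ really does blow up along the range of $\B$ as $x\to 0+$, just as the scalar Stieltjes function $1/x$ does. This is harmless for the use of the corollary in Theorem~\ref{thm:K}, where Lemma~\ref{lem:inverse} needs only that $z^{-1}\A(z)$ be invertible off the cut, which holds for the relevant $\A = \Q_\n(\cdot)^{1/2}$ because $\Q_\n(z)^{1/2}$ is invertible there.
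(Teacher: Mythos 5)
Your proposal is correct and follows essentially the same route as the paper, which simply observes that dividing the representation~\eqref{eq:5} of Theorem~\ref{thm:CBF} by $x$ produces the form~\eqref{eq:S} of Theorem~\ref{thm:S} with the roles of $\B$ and $\C$ interchanged. Your additional direct verification that a function of the form~\eqref{eq:S} satisfies Definition~\ref{def:Stieltjes} (the sign computation for $\Im z\,\Im(z^{-1})$ and $\Im z\,\Im((z+s)^{-1})$, and the remark that the limit at $0+$ may be infinite along the range of $\B=\A(0)$) is a worthwhile supplement, since the paper states Theorem~\ref{thm:S} only in the forward direction and leaves this converse implicit.
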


\end{document}